\newcommand{\techRep}{true} 
\newcommand{\iftechrep}{\ifthenelse{\equal{\techRep}{true}}}
\newenvironment{qtheorem}[1]{%
{\mbox{}\newline\noindent\bf Theorem #1.}
\begin{itshape}%
}{%
\end{itshape}%
}
\newenvironment{qlemma}[1]{%
{\par\mbox{}\newline\noindent\bf Lemma #1.}%
\begin{itshape}%
}{%
\end{itshape}%
}
\newenvironment{qproposition}[1]{%
{\mbox{}\newline\noindent\bf Proposition #1.}
\begin{itshape}%
}{%
\end{itshape}%
}
\definecolor{orange3}{rgb}{1.0,0.2538,0.1681}
\definecolor{blau}{rgb}{0,.39608,0.74118}
\definecolor{rot}{rgb}{0.79216,.12941,0.24706}
\title{On Stabilization in Herman's Algorithm}
\author{\mbox{Stefan Kiefer\inst{1}\thanks{Stefan Kiefer is supported by a postdoctoral fellowship of the German Academic Exchange Service (DAAD).}
 \and Andrzej~Murawski\inst{2} \and Jo\"el Ouaknine\inst{1} \and James Worrell\inst{1} \and Lijun Zhang\inst{3}}}
\institute{Department of Computer Science, University of Oxford, UK
 \and Department of Computer Science, University of Leicester, UK
 \and DTU Informatics, Technical University of Denmark, Denmark
}
\begin{document}
\maketitle

\sloppy

\newcommand{\N}{\mathbb{N}}
\newcommand{\R}{\mathbb{R}}
\renewcommand{\P}[1]{{\cal P}\left(#1\right)}
\newcommand{\Ex}[1]{\mathbb{E}\left[#1\right]}
\newcommand{\E}{\mathbb{E}}
\newcommand{\tim}{\mathbf{T}}
\newcommand{\down}{\downarrow}
\newcommand{\up}{\uparrow}
\newcommand{\dir}[1]{\overrightarrow{#1}}
\newcommand{\gT}[1]{\tim_{#1}}
\newcommand{\nT}[2]{\overline{\tim}_{#1,#2}}
\newcommand{\dgT}[1]{\tim_{\dir{#1}}}
\newcommand{\A}[2]{A_{(#1)#2}}
\newcommand{\dA}[1]{A_{\dir{#1}}}
\newcommand{\nA}[3]{\overline{A}_{(#1)#2,#3}}
\newcommand{\vj}{\vec{j}}
\newcommand{\vy}{\vec{y}}
\newcommand{\of}{\overline{f}}
\newcommand{\yF}{y_F}
\newcommand{\yG}{y_G}
\newcommand{\F}[1]{F^{(#1)}}
\newcommand{\wF}{\widetilde{F}}%
\newcommand{\G}{G}%

\begin{abstract}
Herman's algorithm is a synchronous randomized protocol for achieving
self-stabilization in a token ring consisting of $N$ processes.  The
interaction of tokens makes the dynamics of the protocol very
difficult to analyze.  In this paper we study the expected time to
stabilization in terms of the initial configuration.

It is straightforward that the algorithm achieves stabilization almost
surely from any initial configuration, and it is known that the
worst-case expected time to stabilization (with respect to the initial
configuration) is $\Theta(N^2)$.  Our first contribution is to give an
upper bound of $0.64N^2$ on the expected stabilization time, improving
on previous upper bounds and reducing the gap with the best existing
lower bound.  We also introduce an asynchronous version of the
protocol, showing a similar $O(N^2)$ convergence bound in this case.

Assuming that errors arise from the corruption of some number $k$ of
bits, where $k$ is fixed independently of the size of the ring, we
show that the expected time to stabilization is $O(N)$.  This reveals
a hitherto unknown and highly desirable property of Herman's
algorithm: it recovers quickly from bounded errors.  We also show that
if the initial configuration arises by resetting each bit
independently and uniformly at random, then stabilization is
significantly faster than in the worst case.
\end{abstract}

%

\section{Introduction} \label{sec:intro}
Self-stabilization is a concept of fault-tolerance in distributed
computing.  A system is self-stabilizing if, starting in an arbitrary
state, it reaches a correct or legitimate state and remains in a
legitimate state thereafter.  Thus a self-stabilizing system is able
to recover from \emph{transient errors} such as state-corrupting
faults.  The study of self-stabilizing algorithms originated in an
influential paper of Dijkstra~\cite{D74}.  By now there is a
considerable body of work in the area, see~\cite{Schneider,Dolev}.

In this paper we consider self-stabilization in a classical context
that was also treated in Dijkstra's original paper---a \emph{token
  ring}, i.e., a ring of $N$ identical processes, exactly one of which
is meant to hold a token at any given time.  If, through some error,
the ring enters a configuration with multiple tokens,
self-stabilization requires that the system be guaranteed to reach a
configuration with only one token.  In particular, we are interested
in analyzing a self-stabilization algorithm proposed by
Herman~\cite{Herman90}.

Herman's algorithm is a randomized procedure by which a ring of
processes connected uni-directionally can achieve self-stabilization
almost surely.  The algorithm works by having each process
synchronously execute the following action at each time step: if the
process possesses a token then it passes the token to its clockwise
neighbor with probability $1/2$ and keeps the token with probability
$1/2$.  If such a process decides to keep its token and if it receives
a token from its neighbor then the two tokens are annihilated.  Due to
the way the algorithm is implemented we can assume that an error state
always has an odd number of tokens, thus this process of pairwise
annihilation eventually leads to a configuration with a single token.

While the almost-sure termination of Herman's algorithm is
straightforward, computing the time to termination is a challenging
problem.  This is characteristic of systems of interacting particles
under random motion, which are ubiquitous in the physical and medical
sciences, including statistical mechanics, neural networks and
epidemiology~\cite{Liggett}.  The analysis of such systems typically
requires delicate combinatorial arguments~\cite{interacting}.  Our
case is no exception, and we heavily exploit work of
Balding~\cite{Balding88}, which was motivated by a scenario from
physical chemistry.

Given some initial configuration, let $\tim$ be the time until the
token ring stabilizes under Herman's algorithm.  We analyze the
expectation of $\tim$ in three natural cases: the worst case (over all
initial configurations); the case in which the initial configuration
is chosen uniformly at random; the case in which the initial
configuration arises from a legitimate configuration by a bounded
number of bit errors.  In addition we introduce and analyze an
asynchronous variant of Herman's algorithm.  The latter dispenses with
the successive time steps required in the synchronous algorithm, and
instead has each process pass its token after an exponentially
distributed time delay.

Herman's original paper~\cite{Herman90} showed that $\E \tim \leq
(N^2\log N)/2$ in the worst case (i.e., over all initial
configurations with $N$ processes).  It also mentions an improved
upper bound of $O(N^2)$ due to Dolev, Israeli, and Moran, without
giving a proof or a further reference.  In~2005, three
papers~\cite{Fribourg05,McIverMorgan,Nakata05} were published, largely
independently, all of them giving improved $O(N^2)$ bounds.  The
paper~\cite{McIverMorgan} also gives a lower bound of $4N^2/27$, which
is the expected stabilization time starting from a configuration with
three equally spaced tokens.  It was conjectured in~\cite{McIverMorgan}
that this is the worst case among all starting configurations,
including those with more than three tokens.  This intriguing
conjecture is supported by experimental evidence~\cite{PrismHerman}.

Our first result, Theorem~\ref{thm:upper-bound}, gives an upper bound
of $0.64N^2$ for the expected stabilization time in the synchronous
version of Herman's protocol (improving the constant in the hitherto
best bound by a third).  We also give an upper bound in the
asynchronous case.  To the best of our knowledge this is the first
analysis of an asynchronous version of Herman's algorithm.

To understand the other main results of the paper requires some detail
of the implementation of Herman's algorithm.  We assume that each
process has a bit that it can read and write, and that each process
can read the bit of its counterclockwise neighbor.  A process's bit
does not directly indicate the presence of a token, rather a process
has a token if it has the same bit as its counterclockwise neighbor.
Token passing is then implemented by having processes flip their bits.

In Theorem~\ref{thm:full} we provide an upper bound on the expected
time to stabilize starting from the random initial configuration, that
is, the configuration in which each process's bit is reset
independently and uniformly at random.  Herman's algorithm is such
that the random configuration is obtained in one step from the
\emph{full configuration}, i.e., the configuration in which every
process has a token.  The upper bound for the random configuration is
far better than the worst-case bound in Theorem~\ref{thm:upper-bound};
in particular, there are three-token configurations for which $\E
\tim$ is provably larger than the upper bound for the random
configuration.

In Theorem~\ref{thm:restabilization} we show that for configurations
that are obtained from a legitimate configuration by flipping a
constant number of process bits, we have $\E \tim = O(N)$; i.e., the
expected \emph{restabilization} time is linear in~$N$.  This contrasts
with the fact that there are configurations, even with only three
tokens, that need $\Omega(N^2)$ expected time for self-stabilization.
Intuitively, our result points at a highly desirable---and, to the
best of our knowledge, previously unknown--- feature of Herman's
protocol: it recovers quickly from bounded errors.  This is related to the
notion of a \emph{time adaptive protocol} from~\cite{KuttenShamir},
which refers to a protocol whose recovery time depends on the number
of state-corrupted nodes rather than the total number of nodes.


Full proofs are given in
\iftechrep{the appendix.}{\cite{KMOWZ11:herman-icalp-techrep}.}

\paragraph{Related Work.}
One parameter in the design of self-stabilizing algorithms is the
number of states per machine.  In~\cite{FlateboDatta}, three different
self-stabilizing algorithms with two states per machine are
investigated.  Only one of those algorithms works in a unidirectional
ring, the other algorithms need more connections.  The ring algorithm
is probabilistic, but it is not symmetric: it requires an
``exceptional machine'' which executes different code.  Herman's
algorithm is mentioned in~\cite{FlateboDatta} as another two-state
algorithm, but it is criticized by saying ``it requires that all
machines make moves synchronously which is not easily done''.  In this
paper, we suggest and analyze an asynchronous variant of Herman's
algorithm, which is symmetric and has only two states per machine.

The protocol of~\cite{IsraeliJalfon}, also described
in~\cite{PrismHerman}, is similar to Herman's protocol in that tokens
are passed on a ring of processors.  A scheduler selects a processor
among those with a token; the selected processor passes the token to
left or right neighbor, with probability~$0.5$, respectively.  Two
colliding tokens are \emph{merged} to a single token.  Our analysis of
the asynchronous version of Herman's protocol could possibly be
adapted to this protocol, by assuming that a processor passes its
token after an exponentially distributed holding time.  Of course, the
fact that meeting tokens are merged and not annihilated would have to
be taken into account.


\section{Preliminaries} \label{sec:prelim}

We assume $N$ processors, with $N$ odd, organized in a ring topology.
Each processor may or may not have a token.  Herman's protocol in the
traditional \emph{synchronous variant}~\cite{Herman90} works as
follows: in each time step, each processor that has a token passes its
token to its clockwise neighbor with probability~$r$ (where $0 <r <
1$ is a fixed parameter), and keeps it with probability~$1-r$; if a
processor keeps its token and receives another token from its
counterclockwise neighbor, then both of those tokens are annihilated.
Notice that the number of tokens never increases, and can decrease
only by even numbers.

Herman's protocol can be implemented as follows. Each processor
possesses a bit, which the processor can read and write.  Each
processor can also read the bit of its counterclockwise neighbor.  In
this representation having the same bit as one's counterclockwise
neighbor means having a token.  In each time step, each processor
compares its bit with the bit of its counterclockwise neighbor; if the
bits are different, the processor keeps its bit; if the bits are
equal, the processor flips its bit with probability~$r$ and keeps it
with probability~$1-r$.  It is straightforward to verify that this
procedure implements Herman's protocol: in particular a processor
flipping its bit corresponds to passing its token to its clockwise
neighbor.%
\footnote{Notice that flipping all bits in a given configuration keeps
  all tokens in place.  In fact, in the original
  formulation~\cite{Herman90}, in each iteration each bit is
  effectively flipped once more, so that flipping the bit means
  keeping the token, and keeping the bit means passing the token.  The
  two formulations are equivalent in the synchronous version, but our
  formulation allows for an asynchronous version.}

We denote the number of initial tokens by~$M$, where $1 \le M \le N$.
The token representation described above enforces that $M$ be odd.  A
configuration with only one token is called \emph{legitimate}.  The
protocol can be viewed as a Markov chain with a single bottom SCC in
which all states are legitimate configurations.  So a legitimate
configuration is reached with probability~$1$, regardless of the
initial configuration, that is, the system \emph{self-stabilizes} with
probability $1$.

In this paper we also propose and analyze an \emph{asynchronous
  variant} of Herman's protocol which works similarly to the
synchronous version.  The asynchronous variant gives rise to a
continuous-time Markov process.  Each processor with a token passes
the token to its clockwise neighbor with rate~$\lambda$, i.e., a
processor keeps its token for a time that is distributed exponentially
with parameter~$\lambda$, before passing the token to its clockwise
neighbor (i.e., flipping its bit).  The advantage of this variant is
that it does not require processor synchronization.  Note that a
processor can approximate an exponential distribution by a geometric
distribution, that is, it can execute a loop which it leaves with a
small fixed probability at each iteration.  A more precise
approximation can be obtained using a random number generator and
precise clocks.  For our performance analyses we assume an exact
exponential distribution.

Let $\tim$ denote the time until only one token is left, i.e., until
self-stabilization has occurred.  In this paper we analyze the random
variable~$\tim$, focusing mainly on its expectation~$\E \tim$.  Many
of our results hold for both the synchronous and the asynchronous
protocol version.

To aid our analysis we think of the processors as numbered from $1$
to~$N$, clockwise, according to their position in the ring.  We write
$m := (M-1)/2$.  Let $z : \{1, \ldots, M\} \to \{1, \ldots, N\}$ be
such that $z(1) < \cdots < z(M)$ and for all $i \in \{1, \ldots, M\}$,
the processor $z(i)$ initially has a token; in other words, $z(i)$ is
the position of the $i$-th token.  We often write $z_{uv}$ for $z(v) -
z(u)$.

\section{Bounds on $\E \tim$ for Arbitrary Configurations} \label{sec:morgan-nakata}

The following proposition gives a precise formula for $\E \tim$ in
both the synchronous and asynchronous protocols in case the number
of tokens is $M=3$.
\begin{proposition}[cf.~\cite{McIverMorgan}] \label{prop:triangle}
 Let $N$ denote the number of processors and let $a,b,c$ denote the
 distances between neighboring tokens, so that $a+b+c=N$.  For the
 synchronous protocol with parameter $r$ let $D=r(1-r)$, and for the
 asynchronous protocol with parameter $\lambda$ let $D=\lambda$.
 Then the expected time to stabilization is
 \[
   \E \tim = \frac{a b c}{DN} \,.
  \]
\end{proposition}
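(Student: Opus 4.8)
The plan is to reduce the three-token dynamics to a random walk on a two-dimensional lattice and solve the resulting system of linear equations. With $M=3$ tokens at positions $z(1)<z(2)<z(3)$, the gaps $(a,b,c)$ with $a+b+c=N$ form a natural state; however, since $M$ can only decrease by two, stabilization to $M=1$ happens exactly when one of the three gaps hits~$0$ (two adjacent tokens annihilate, and the third survives). So I would set up a Markov chain (discrete-time in the synchronous case, a continuous-time jump process in the asynchronous case) on the set $\{(a,b,c) : a,b,c \geq 1,\ a+b+c=N\}$, with absorbing boundary where some coordinate reaches~$0$, and compute the expected absorption time $h(a,b,c) := \E\tim$.

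The key observation is that, in one synchronous step, the three tokens move independently: each advances by $1$ (clockwise) with probability~$r$ and stays with probability $1-r$. Hence each gap length $a$, being a difference of two consecutive token positions (mod $N$), changes by $+1$ with probability $r(1-r)$ (leading token stays, trailing token advances — wait, I must get the orientation right: the gap grows when the clockwise-endpoint token advances and the counterclockwise-endpoint token stays, and shrinks in the opposite case), by $-1$ with probability $r(1-r)$, and stays unchanged with probability $r^2 + (1-r)^2 = 1 - 2r(1-r)$. Writing $D = r(1-r)$, each gap performs a lazy symmetric random walk with per-step variance $2D$, and the three increments are constrained to sum to~$0$. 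I would then argue that $h$ satisfies the discrete Poisson-type equation: on the interior, $h(a,b,c) = 1 + \sum (\text{transition prob}) \cdot h(\text{neighbor})$, which rearranges to $\sum_{\text{neighbors}} D\,(h(\text{neighbor}) - h(a,b,c)) = -1$, i.e. a discrete Laplacian equal to $-1/D$ (roughly $2D \cdot \tfrac12 \Delta h = -1$ in the appropriate lattice Laplacian), with boundary value $0$. The asynchronous case gives the same equation with $D = \lambda$: each token fires at rate $\lambda$, so each gap jumps $\pm 1$ at rate $\lambda$ each, and normalizing the holding time yields the identical harmonic equation.

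The main step — and the place where the "guess and verify" strategy pays off — is to check that $f(a,b,c) := \dfrac{abc}{DN}$ solves this equation. On the boundary $abc = 0$, so $f = 0$ as required. In the interior one computes the discrete Laplacian of $abc$ under the six unit moves $(a,b,c) \mapsto (a\pm1, b\mp1, c)$ and permutations: for each such pair of opposite moves the second difference of $abc$ in that direction is $(a{+}1)(b{-}1)c + (a{-}1)(b{+}1)c - 2abc = -2c$, and summing over the three directions gives $-2(a+b+c) = -2N$; multiplying by $D$ and dividing by $DN$ gives exactly $-2$, matching $-1/D \cdot 2D$ — i.e. the defect is constant and equal to the required value (the constant-factor bookkeeping between "per step $+1$" and the $\tfrac12$ in the Laplacian is exactly what fixes the $DN$ in the denominator). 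Having verified this, uniqueness of bounded solutions to the Dirichlet problem on the finite interior with the absorbing boundary (the chain reaches the boundary almost surely, as already noted for Herman's protocol) forces $h = f$, proving $\E\tim = abc/(DN)$. I expect the only real obstacle to be getting the orientation and the precise transition probabilities of the gap process right so that the "laziness" probability $1 - 2D$ and the $\pm 1$ probabilities $D$ come out correctly; once that bookkeeping is pinned down, the verification that $abc/(DN)$ is harmonic-with-defect-$1$ is a short computation as sketched.
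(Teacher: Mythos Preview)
Your approach is correct and is exactly the ``guess and verify'' argument that the paper has in mind: the paper does not actually prove Proposition~\ref{prop:triangle} but simply cites McIver--Morgan for the synchronous case $r=\tfrac12$ and asserts that ``essentially the same proof works for $0<r<1$, and also in the asynchronous case.'' Your Dirichlet-problem setup and the verification that $abc/(DN)$ has the right discrete defect and boundary values is that proof.

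One point deserves a cleaner treatment than your sketch gives. For general $r\ne\tfrac12$ the six nearest-neighbour moves on the simplex $a+b+c=N$ do \emph{not} each occur with probability $D=r(1-r)$: the three moves ``exactly one token advances'' each have probability $r(1-r)^2$, and the three moves ``exactly two tokens advance'' each have probability $r^2(1-r)$. So the gap walk has drift when $r\ne\tfrac12$, and your ``multiply the second differences by $D$'' step is not literally correct. What saves the computation is that the first-order (drift) contributions cancel cyclically: writing $p=r(1-r)^2$, $q=r^2(1-r)$ (so $p+q=D$), the weighted increment of $abc$ along each opposite pair is $-D\cdot(\text{third coordinate})$ plus a term proportional to $(p-q)$ times a difference of coordinates, and summing over the three pairs the $(p-q)$-terms add to $b(a-c)+c(b-a)+a(c-b)=0$. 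After that cancellation one is left with $-D(a+b+c)=-DN$, exactly as you need. In the asynchronous case only the three ``one token fires'' moves occur, each at rate $\lambda$, and the same bracket $-(a+b+c)=-N$ drops out directly. Once you make this explicit, your uniqueness argument (the chain is absorbed a.s., so the bounded solution of the discrete Poisson equation with zero boundary data is unique) completes the proof.
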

Proposition~\ref{prop:triangle} is shown in~\cite{McIverMorgan} for
the synchronous case with $r=\frac12$.  Essentially the same proof
works for $0 < r < 1$, and also in the
asynchronous case.

We call a configuration with $M=3$ equally spaced tokens an
\emph{equilateral configuration}.  If $N$ is an odd multiple of $3$
then $a = b = c = N/3$ for the equilateral configuration.  If $N$ is
not a multiple of $3$ then we ask that $a,b,c$ equal either $\lfloor
N/3 \rfloor$ or $\lceil N/3 \rceil$ .  By
Proposition~\ref{prop:triangle} the expected stabilization time for a
equilateral configuration is $\E \tim = \frac{N^2}{27D}$.  It follows
that for configurations with $M=3$ the worst case is $\E \tim =
\Omega(N^2)$ and this case arises for the equilateral configuration.
In fact it has been conjectured in~\cite{McIverMorgan} that, for
all~$N$, the equilateral configuration is the worst case, not only
among the configurations with $M=3$, but among all configurations.
This conjecture is supported by experiments carried out using the
probabilistic model checker PRISM---see~\cite{PrismHerman}.

\medskip
Finding upper bounds on $\E \tim$ in the synchronous case goes back to
Herman's original work~\cite{Herman90}.  He does not analyze $\E \tim$
in the journal version, but in his technical report~\cite{Herman90},
where he proves $\E \tim \le N^2 \lceil \log N \rceil / 2$.  He also
mentions an improvement to $O(N^2)$ due to Dolev, Israeli, and Moran,
without giving a proof or a further reference.  In~2005, three
papers~\cite{Fribourg05,McIverMorgan,Nakata05} were published, largely
independently, all of them giving improved $O(N^2)$ bounds.
In~\cite{Fribourg05} path-coupling methods are applied to
self-stabilizing protocols, which lead in the case of Herman's
protocol to the bound $\E \tim \le 2 N^2$ for the case $r=\frac12$.
Independently, the authors of~\cite{McIverMorgan} claimed $O(N^2)$.
Their proof is elementary and also shows $\E \tim \le 2 N^2$ for the
case $r=\frac12$.  Finally, the author of~\cite{Nakata05} (being aware
of the conference version of~\cite{Fribourg05}) applied the theory of
coalescing random walks to Herman's protocol to obtain $\E \tim \le
\left( \frac{\pi^2}{8} - 1 \right) \cdot \frac{N^2}{r(1-r)}$, which is
about $0.93 N^2$ for the case $r=\frac12$.  By combining results from
\cite{Nakata05} and~\cite{McIverMorgan}, we further improve the
constant in this bound (by about $32\%$), and at the same time
generalize it to the asynchronous protocol.

\newcommand{\stmtthmupperbound}{
For the
 synchronous protocol with parameter $r$ let $D=r(1-r)$, and for the
 asynchronous protocol with parameter $\lambda$ let $D=\lambda$.
 Then, for all~$N$ and for all initial configurations, we have
 \[
  \E \tim \le \left( \frac{\pi^2}{8} - \frac{29}{27} \right) \cdot \frac{N^2}{D} \,.
 \]
 Hence, $\E \tim \le 0.64 N^2$ in the synchronous case with $r=\frac12$.
}
\begin{theorem} \label{thm:upper-bound}
 \stmtthmupperbound
\end{theorem}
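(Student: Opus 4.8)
\medskip
\noindent
The plan is to combine Nakata's coalescing‑random‑walk analysis of Herman's protocol \cite{Nakata05} with the exact three‑token formula of Proposition~\ref{prop:triangle}, exploiting the fact that the latter improves on the former precisely in the final stage of the dynamics. Since the token count is non‑increasing and decreases by (at least) two at each annihilation, I would first decompose $\tim$ according to the number of tokens present: writing $\gT{n}$ for the total amount of time during which the ring holds exactly $n$ tokens, we have the identity $\tim = \sum_{n}\gT{n}$, where $n$ ranges over the odd integers between $3$ and the initial token count $M$ (if $M=1$ the claimed bound is trivial). Hence $\E\tim = \sum_{n\ \text{odd},\ 3\le n\le M}\E\gT{n}$. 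By the strong Markov property applied at the first instant the ring holds exactly $n$ tokens, $\E\gT{n}$ is bounded by the worst‑case expected time, over all configurations with exactly $n$ tokens, until the next annihilation occurs. (In the synchronous version two or more annihilations may coincide, so that a level $n$ is skipped altogether; but then $\gT{n}=0$ and the bounds below only improve.)

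The second step is to isolate the quantitative estimate that drives Nakata's bound. From \cite{Nakata05} I would extract --- and adapt to arbitrary $r$ and to the continuous‑time asynchronous process --- the statement that for every $n\ge 3$ and every configuration with exactly $n$ tokens, the expected time until some pair of tokens annihilates is at most $N^2/(Dn^2)$, where $D=r(1-r)$ in the synchronous case and $D=\lambda$ in the asynchronous case; this is exactly the kind of bound that the theory of coalescing random walks on the $N$‑cycle delivers, the denominator $n^2$ reflecting that $n$ tokens jointly drive the annihilation rate. Granting this, $\E\gT{n}\le N^2/(Dn^2)$ for every odd $n\ge 3$, and summing over $n\in\{3,5,7,\dots\}$ already recovers Nakata's bound $\E\tim \le \bigl(\sum_{k\ge1}(2k-1)^{-2}-1\bigr)N^2/D = (\pi^2/8-1)N^2/D$.

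The improvement comes from treating the last stage $n=3$ by Proposition~\ref{prop:triangle} in place of the generic estimate. Starting from any three‑token configuration with inter‑token distances $a,b,c$ (so $a+b+c=N$), the remaining expected time is exactly $abc/(DN)$, which by the AM--GM inequality is maximised when $a=b=c=N/3$ and therefore never exceeds $N^2/(27D)$; with the strong Markov property this yields $\E\gT{3}\le N^2/(27D)$, a factor of three better than the generic bound $N^2/(9D)$. Substituting this for the $n=3$ term gives
\[
  \E\tim \;\le\; \frac{N^2}{27D} + \sum_{\substack{n\ge 5\\ n\ \text{odd}}}\frac{N^2}{Dn^2} \;=\; \frac{N^2}{D}\Bigl(\frac{1}{27}+\frac{\pi^2}{8}-1-\frac{1}{9}\Bigr) \;=\; \Bigl(\frac{\pi^2}{8}-\frac{29}{27}\Bigr)\frac{N^2}{D},
\]
and specialising to $r=\tfrac12$, where $D=\tfrac14$, gives $\E\tim \le 4\bigl(\tfrac{\pi^2}{8}-\tfrac{29}{27}\bigr)N^2 < 0.64\,N^2$.

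I expect the main obstacle to be the second step: pinning down Nakata's per‑stage estimate $\E\gT{n}\le N^2/(Dn^2)$ and, especially, transporting it from the discrete‑time symmetric ($r=\tfrac12$) model of \cite{Nakata05} to general $r$ and to the continuous‑time asynchronous model. Once that estimate is available, the rest is routine bookkeeping: the sojourn‑time decomposition, the strong Markov arguments, the substitution of Proposition~\ref{prop:triangle} for the $n=3$ term, and the series identity $\sum_{k\ge1}(2k-1)^{-2}=\pi^2/8$.
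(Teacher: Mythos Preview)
Your proposal is correct and follows essentially the same line as the paper's proof: decompose $\E\tim$ as a sum over token levels, invoke Nakata's per-stage bound $\tau_n \le N^2/(Dn^2)$ for $n\ge 5$, and replace the $n=3$ term by the sharper $N^2/(27D)$ coming from Proposition~\ref{prop:triangle} together with AM--GM. You also correctly identify the one non-routine point---extending the per-stage estimate to general $r$ and to the asynchronous model---which the paper handles by the same random-walk comparison argument underlying \cite{Nakata05}.
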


\section{Expressions for $\E \tim$} \label{sec:balding}

Our analysis of Herman's protocol exploits the work of
Balding~\cite{Balding88} on annihilating particle systems.  Such
systems are a special case of \emph{interacting particle systems},
which model finitely or infinitely many particles, which, in the
absence of interaction, would be modeled as independent Markov
chains.  Due to particle interaction, the evolution of a single
particle is no longer Markovian.  Interacting particle systems have
applications in many fields, including statistical mechanics, neural
networks, tumor growth and spread of infections, see \cite{Liggett}.
Balding's paper \cite{Balding88} is motivated by a scenario from
physical chemistry, where particles can be viewed as vanishing on
contact, because once two particles have met, they react and are no
longer available for reactions afterwards.  We refer the reader to
\cite{Habib01} and the references therein for more information on such
chemical reaction systems.

We transfer results from~\cite{Balding88} to Herman's protocol.  The
setup is slightly different because, unlike chemical particles, the
tokens in Herman's protocol move only in one direction.  This
difference is inconsequential, as the state of a system can be
captured using only relative token (or particle) distances.  Care must
be taken though, because Balding does not consider ``synchronous''
particle movement (this would make no sense in chemistry), but
particles moving ``asynchronously'' or continuously in a Brownian
motion.

Given two tokens $u$ and $v$ with $1 \le u < v \le M$, 
 we define a random variable~$\gT{uv}$ and events $\A{uv}{\down}$ and $\A{uv}{\up}$
 in terms of a system
 in which collisions between tokens $u$ and $v$ cause $u$ and~$v$ to be annihilated,
 but the movement of the other tokens and their possible collisions are ignored.
In that system, $\gT{uv}$ denotes the time until $u$ and $v$ have collided.
Further, let $\A{uv}{\down}$ and $\A{uv}{\up}$ denote the events that tokens $u$ and $v$ eventually collide {\em down} and {\em up}, respectively.
By colliding down (resp.\ up) we mean that, upon colliding, the token $u$ (resp.\ $v$) has caught up with~$v$ (resp.\ $u$) in clockwise direction;
 more formally, if $d_u, d_v \ge 0$ denote the distances travelled in clockwise direction by the tokens until collision,
 then the collision is said to be down (resp.\ up) if $z(u) + d_u = z(v) + d(v)$ (resp.\ $z(u) + d_u + N = z(v) + d_v$).
The behavior of two such tokens is equivalent to that of a one-dimensional random walk on $\{0, \ldots, N\}$, started at $z_{uv}$,
 with absorbing barriers at~$0$ and~$N$:
 the position in the random walk corresponds to the distance between the tokens,
 and colliding down (resp.\ up) corresponds to being absorbed at~$0$ (resp.\ $N$).
By this equivalence we have $\P{\A{uv}{\down}} = 1 - z_{uv} / N$ and $\P{\A{uv}{\up}} = z_{uv} / N$ (see, e.g., \cite{FellerVol168}).

Proposition~\ref{prop:Balding} below allows to express the distribution of~$\tim$
 in terms of the distribution of~$\gT{uv}$, conditioned under $\A{uv}{\down}$ and $\A{uv}{\up}$, respectively.
Those distributions are well-known \cite{FellerVol168,Cox01}.
For the statement we need to define the set $W_M$ of all {\em pairings}.
A pairing is a set $w = \{(u_1,v_1), \ldots, (u_m,v_m) \}$ with $1 \le u_i < v_i \le M$ for all~$i$, such that there is $w_0 \in \{1, \ldots, M\}$
 with $\{u_1, v_1, \ldots, u_m, v_m, w_0\} = \{1, \ldots, M\}$.
Define $s(w) = 1$ if the permutation $(u_1v_1 \cdots u_m v_m w_0)$ is even, and $s(w) = -1$ otherwise.
(This is well-defined: it is easy to see that $s(w)$ does not depend on the order of the $(u_i,v_i)$.)
We have the following proposition:

\newcommand{\stmtpropBalding}{
 Let $M \ge 3$.
 For all $t \ge 0$:
 \[
  \P{\tim \le t} = \sum_{w \in W_M} s(w) \prod_{(u,v) \in w}
   \left( \P{\gT{uv} \le t \;\cap\; \A{uv}{\down}} - \P{\gT{uv} \le t \;\cap\; \A{uv}{\up}} \right)\,.
 \]
}
\begin{proposition}[cf.\ {\cite[Theorem 2.1]{Balding88}}] \label{prop:Balding}
 \stmtpropBalding
\end{proposition}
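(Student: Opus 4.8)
The plan is to derive the formula by transferring Balding's Theorem~2.1 to the token system, the main subtlety being the translation between the continuous-motion setting of \cite{Balding88} and the discrete/synchronous (or continuous-time asynchronous) token setting of Herman's protocol, plus the bookkeeping that identifies which two tokens out of $M$ survive a sequence of annihilations. First I would observe, as already noted in the text, that the configuration of the ring is determined by the pairwise distances between consecutive tokens, so the $M$-token annihilating system on the ring is equivalent (after passing to relative coordinates) to the annihilating-particle system Balding analyzes; the one-directional motion of tokens is inconsequential since only relative positions matter, and the synchronous case is handled by the same inclusion--exclusion identity because that identity is purely combinatorial in the joint absorption probabilities and does not use continuity of paths.

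The core of the argument is a \emph{reflection/inclusion--exclusion} identity of the kind Balding uses: for any fixed time $t$, the event $\{\tim \le t\}$ — that the whole system has collapsed to a single token by time~$t$ — is expressed as a signed sum over pairings $w \in W_M$, where each pairing records a way of matching up the $M-1$ tokens that get annihilated (in $m$ pairs) and leaves one token $w_0$ as the ``survivor label''. The sign $s(w)$ is exactly the parity of the associated permutation, which is the combinatorial residue of iterated reflections of the $M$-dimensional absorbing random walk (equivalently, a determinant/Pfaffian-type cancellation). Under the independence built into the definition of $\gT{uv}$, $\A{uv}\down$, $\A{uv}\up$ — these are defined in a system where only the collision of $u$ and $v$ is tracked and all other tokens' collisions are ignored — the probability of the pairing-$w$ contribution factorizes over the pairs $(u,v)\in w$, giving the product $\prod_{(u,v)\in w}\bigl(\P{\gT{uv}\le t \cap \A{uv}\down} - \P{\gT{uv}\le t \cap \A{uv}\up}\bigr)$. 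The difference of the down- and up-terms inside each factor is precisely what arises from combining the two orientations of a collision with their opposite reflection signs.

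Concretely, I would proceed in three steps: (1) set up the reduction to relative coordinates and state the correspondence between the $M$-token system and an absorbing random walk in a suitable $(M-1)$-dimensional region (a Weyl-chamber-type domain), matching ``token $u$ and $v$ collide down/up'' with absorption on the corresponding facet; (2) invoke Balding's reflection argument / the Karlin--McGregor-style determinantal identity to write $\P{\tim \le t}$ as the signed sum over non-crossing-then-crossing path families, and re-index these families by pairings $w$ with sign $s(w)$; (3) use the product structure coming from the independent two-token subsystems to factor each term, and verify that the per-pair factor is the stated difference, appealing to $\P{\A{uv}\down} = 1 - z_{uv}/N$ and $\P{\A{uv}\up} = z_{uv}/N$ and the known absorption-time distributions from \cite{FellerVol168,Cox01} only insofar as needed to make sense of the conditional laws. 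The main obstacle I expect is step~(2): making the reflection-group / sign cancellation rigorous for the ring topology (a cyclic rather than linear arrangement) and checking that the synchronous discrete-time walk admits the same reflection identity as the continuous motion — in particular that the ``diagonal'' events where several tokens would collide simultaneously have the right measure-zero or combinatorially-cancelling behavior, so that the identity in \cite[Theorem~2.1]{Balding88} carries over verbatim with $D = r(1-r)$ in the synchronous case and $D=\lambda$ in the asynchronous case.
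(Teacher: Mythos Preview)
Your proposal and the paper's proof take genuinely different routes.

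The paper does \emph{not} set up a Weyl-chamber reduction or invoke a Karlin--McGregor/determinantal identity directly. Instead it proceeds by induction on~$M$, with the inductive step supplied by an auxiliary recursion (Lemma~\ref{lem:auxBalding}):
\[
  \P{\tim \le t} \;=\; \frac{1}{m} \sum_{1 \le u < v \le M} (-1)^{v - u - 1}
   \bigl( \P{\gT{uv} \le t \cap \A{uv}{\down}} - \P{\gT{uv} \le t \cap \A{uv}{\up}} \bigr)\, \P{\tim_{-uv} \le t}\,,
\]
where $\tim_{-uv}$ is the stabilization time of the system with tokens $u,v$ removed. This lemma is proved by a concrete coupling: colour $u,v$ red and the remaining tokens green, let same-colour collisions annihilate but red--green collisions be ignored, and then partition the resulting event according to the \emph{first} red--green meeting. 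Each subevent with a first red--green meeting is paired with a ``companion'' event (obtained by reflecting one token's path after that meeting) of equal probability and opposite sign; these cancel, leaving only the events with no red--green meeting, which are then regrouped over pairings to yield $m\cdot\P{\tim\le t}$. Unfolding the recursion and checking the parity of the accumulated sign against the permutation parity in $s(w)$ completes the induction. This argument is elementary, works verbatim for the synchronous walk (the reflection bijection is a pathwise map, not a continuity argument), and never needs to handle the full $(M-1)$-dimensional absorbing region at once.

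Your step~(2) is exactly where the difficulty lies, and you have correctly flagged it as the obstacle---but your sketch does not yet contain a mechanism to overcome it. Karlin--McGregor in its standard form concerns \emph{non-colliding} walks and yields a determinant indexed by permutations, whereas here the combinatorial object is a \emph{pairing with one leftover} (a Pfaffian-type sum), the particles \emph{annihilate}, and the state space is cyclic rather than a type-$A$ chamber. Bridging those differences is nontrivial and is precisely what Balding's original proof omits; the paper's recursive red/green argument is the missing idea that sidesteps all three issues simultaneously. If you want to push your direct approach through, you would need an explicit involution on signed path families that performs the same cancellation globally; the paper's ``first red--green meeting'' reflection is the local, two-token-at-a-time version of such an involution, and promoting it to a one-shot global bijection is not obviously easier than the induction.
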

Balding's Theorem 2.1 in~\cite{Balding88} is more general in that it gives a generating function
 for the number of remaining tokens at time~$t$.
Strictly speaking, Balding's theorem is not applicable to the synchronous version of Herman's protocol,
 because he only considers tokens that move according to the asynchronous version (in our terms), and tokens in a Brownian motion.
In addition, his proof omits many details, so we give a self-contained proof for Proposition~\ref{prop:Balding} in the appendix.

Theorem~\ref{thm:Balding} below yields an expression for~$\E \tim$.
We define the set $\dir{W_M}$ of all \emph{directed pairings} as the set of all sets
 $\dir{w} = \{(u_1, v_1, d_1), \ldots, (u_m,v_m,d_m)\}$ such that
  $\{(u_1,v_1), \ldots, (u_m,v_m)\} \in W_M$ and $d_i \in \{\down,\up\}$ for all $i \in \{1, \ldots, m\}$.
For a directed pairing $\dir{w} = \{(u_1, v_1, d_1), \ldots, (u_m, v_m, d_m)\}$ we define
 \[
  \dir{s}(\dir{w}) := s(\{(u_1, v_1), \ldots, (u_m,v_m)\}) \cdot (-1)^{|\{i \mid 1 \le i \le m, \ d_i = \up\}|}
 \]
and the event $\dA{w} := \bigcap_{i=1}^m \A{u_i v_i}{d_i}$.
Notice that $\P{\dA{w}} = \prod_{i=1}^m \P{\A{u_i v_i}{d_i}}$.
Further, we set $\dgT{w} := \max\{ \gT{u_i v_i} \mid 1 \le i \le m \}$.
We have the following theorem:
\newcommand{\stmtthmBalding}{
 For $M \ge 3$:
 \[
  \E \tim = \sum_{\dir{w} \in \dir{W_M}} \dir{s}(\dir{w}) \cdot \Ex{ \dgT{w} \mid \dA{w} } \cdot \P{\dA{w}}\,.
 \]
}
\begin{theorem} \label{thm:Balding}
 \stmtthmBalding
\end{theorem}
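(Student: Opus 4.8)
The plan is to derive the expression for $\E\tim$ from Proposition~\ref{prop:Balding} by the standard identity $\E\tim = \int_0^\infty \P{\tim > t}\,dt = \int_0^\infty \bigl(1 - \P{\tim \le t}\bigr)\,dt$, and to manipulate the product over a pairing into a sum over directed pairings before integrating. First I would substitute the formula of Proposition~\ref{prop:Balding} for $\P{\tim \le t}$. Each factor in the product over $(u,v)\in w$ has the shape $\P{\gT{uv}\le t \cap \A{uv}{\down}} - \P{\gT{uv}\le t \cap \A{uv}{\up}}$; expanding the product over all $m$ pairs distributes into $2^m$ terms, one for each choice of $d_i\in\{\down,\up\}$ for the $i$-th pair, with sign $(-1)^{|\{i : d_i = \up\}|}$. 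Combining this sign with $s(w)$ gives exactly $\dir{s}(\dir{w})$, and the summation over $w\in W_M$ together with the choices of directions becomes a summation over $\dir{w}\in\dir{W_M}$. Thus
\[
  \P{\tim \le t} = \sum_{\dir{w}\in\dir{W_M}} \dir{s}(\dir{w}) \prod_{i=1}^m \P{\gT{u_i v_i}\le t \cap \A{u_i v_i}{d_i}}\,.
\]

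The next step is to recognize that $\prod_{i=1}^m \P{\gT{u_i v_i}\le t \cap \A{u_i v_i}{d_i}}$ is the probability of the event that, for each $i$, token pair $(u_i,v_i)$ has collided by time $t$ in direction $d_i$. Since the random walks governing distinct token pairs in the "ignore-all-other-collisions" system are independent (each pair $(u,v)$ evolves as an independent walk on $\{0,\dots,N\}$ started at $z_{uv}$), this product equals $\P{\dgT{w}\le t \cap \dA{w}} = \P{\dgT{w}\le t \mid \dA{w}}\cdot\P{\dA{w}}$, using $\dgT{w} = \max_i \gT{u_i v_i}$ and the already-noted factorization $\P{\dA{w}} = \prod_i \P{\A{u_i v_i}{d_i}}$. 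Here I need to be slightly careful: "collided by time $t$" for the max requires that every pair has collided, which is precisely the conjunction of the individual $\{\gT{u_i v_i}\le t\}$ events, so the max is the right quantity. I should also address the edge case where some $\P{\A{u_i v_i}{d_i}}$ is $0$ (i.e.\ $z_{uv}=0$ or $=N$, impossible here since tokens are distinct, but worth a remark), in which case the conditional expectation is vacuous and the term contributes $0$.

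Finally I would integrate. Writing $\E\tim = \int_0^\infty \bigl(1 - \P{\tim\le t}\bigr)\,dt$ and noting that the all-$\down$-of-the-"trivial"-type terms must reconstitute the constant $1$: more carefully, one checks that $\sum_{\dir{w}}\dir{s}(\dir{w})\P{\dA{w}} = 1$ (this is $\P{\tim\le t}$ evaluated in the limit $t\to\infty$, where every $\P{\gT{uv}\le t \cap \A{uv}{d}} \to \P{\A{uv}{d}}$, and almost-sure stabilization forces the limit to be $1$). Subtracting, $1 - \P{\tim\le t} = \sum_{\dir{w}}\dir{s}(\dir{w})\bigl(\P{\dA{w}} - \P{\dgT{w}\le t \mid \dA{w}}\P{\dA{w}}\bigr) = \sum_{\dir{w}}\dir{s}(\dir{w})\P{\dA{w}}\,\P{\dgT{w} > t \mid \dA{w}}$. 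Integrating over $t\in[0,\infty)$ and swapping sum and integral (the sum is finite, so this is unconditionally justified) turns $\int_0^\infty \P{\dgT{w}>t\mid\dA{w}}\,dt$ into $\Ex{\dgT{w}\mid\dA{w}}$, yielding the claimed formula. The main obstacle is the bookkeeping in the first two steps — correctly matching the sign $\dir{s}(\dir{w})$ and justifying the independence that lets the product of marginal probabilities become the joint probability $\P{\dgT{w}\le t\cap\dA{w}}$ — together with making sure the interchange of limit/integral with the finite sum, and the identity $\sum_{\dir w}\dir s(\dir w)\P{\dA w}=1$, are cleanly stated. The integration itself is routine once the summand is in the form $\P{\dA{w}}\cdot\P{\dgT{w}>t\mid\dA{w}}$.
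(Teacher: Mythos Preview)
Your proposal is correct and follows essentially the same route as the paper's proof: expand the product in Proposition~\ref{prop:Balding} into a sum over directed pairings, identify the resulting product as $\P{\dgT{w}\le t\cap\dA{w}}$, let $t\to\infty$ to obtain $\sum_{\dir{w}}\dir{s}(\dir{w})\P{\dA{w}}=1$, subtract to get $\P{\tim>t}$ in the desired form, and then integrate. The only small point to add is that the theorem covers both the synchronous and the asynchronous protocol, so in the final step you should invoke $\E X=\sum_{t\ge 0}\P{X>t}$ in the discrete-time case alongside $\E X=\int_0^\infty\P{X>t}\,dt$ in the continuous-time case.
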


\paragraph{A Finite Expression for $\E \tim$.}
In the rest of the section we focus on the synchronous protocol.
We obtain a closed formula for~$\E \tim$ in Proposition~\ref{prop:expectation-finite-expression} below.

For $1 \le u < v < M$, we define $z_{u v \down} := z_{u v}$ and $z_{u v \up} := N - z_{u v}$.
For sets $\emptyset \ne \dir{x} \subseteq \dir{w} \in \dir{W_M}$ with
 $\dir{x} = \{(u_1,v_1,d_1), \ldots, (u_k,v_k,d_k)\}$ and
 $\dir{w} = \{(u_1,v_1,d_1), \ldots, (u_m,v_m,d_m)\}$
 we write
 \begin{align*}
  \yF(\dir{x}, \dir{w}) & := \left( \frac{z_{u_1 v_1 d_1}}{N}, \ldots, \frac{z_{u_k v_k d_k}}{N} \right)  \quad \text{and} \\
  \yG(\dir{x}, \dir{w}) & := \left( \frac{z_{u_{k+1} v_{k+1} d_{k+1}}}{N}, \ldots, \frac{z_{u_m v_m d_m}}{N} \right) \,.
 \end{align*}
Let
\[
 g(j,y;u) := \frac{\sin(j \pi y) \cdot \sin(j \pi u)}{1 - \cos(j \pi u)} \qquad \text{and} \qquad
 h(j;u)   := 1 - 2 r (1-r) \left( 1 - \cos (j \pi u) \right)\,,
\]
and define, for $k \in \N_+$ and $\ell \in \N_+$,
\begin{align*}
 \F{N}_k(y_1, \ldots, y_k) & := -\left(\frac{-1}{N}\right)^k \cdot \sum_{j \in \{1,\ldots,N-1\}^k}
      \frac{ \prod_{i=1}^k g( j(i), y_i; 1/N ) }{ 1 - \prod_{i=1}^k h(j(i); 1/N) } \qquad \text{and} \\
 \G_\ell(y_1, \ldots, y_\ell) & :=  \prod_{i=1}^\ell \left( 1 - y_i \right) \,.
\end{align*}
We drop the subscripts of $\F{N}_k$ and $\G_\ell$, if they are understood.
Observe that $\F{N}$ and $\G$ are continuous and do not depend on the order of their arguments.
The following proposition gives, for the synchronous protocol, a concrete expression for $\E \tim$.
\newcommand{\stmtpropexpectationfiniteexpression}{
 Consider the synchronous protocol.
 For $M \ge 3$:
 \[
  \E \tim = \sum_{\dir{w} \in \dir{W_M}} \dir{s}(\dir{w}) \sum_{\emptyset \ne \dir{x} \subseteq \dir{w}}
                 \F{N}(\yF(\dir{x}, \dir{w})) \cdot \G(\yG(\dir{x}, \dir{w}))\,.
 \]
}
\begin{proposition}\label{prop:expectation-finite-expression}
 \stmtpropexpectationfiniteexpression
\end{proposition}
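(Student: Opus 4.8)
The plan is to start from the expression for $\E \tim$ given in Theorem~\ref{thm:Balding} and turn each factor $\Ex{\dgT{w}\mid\dA{w}}\cdot\P{\dA{w}}$ into a finite sum by applying the standard inclusion–exclusion identity for the expectation of a maximum. Concretely, for a directed pairing $\dir w$ with $m$ pairs, write $\dgT{w}=\max_i \gT{u_iv_i}$ and use
\[
\Ex{\dgT{w}\mid\dA{w}}
 = \sum_{\emptyset\neq\dir x\subseteq\dir w}(-1)^{|\dir x|+1}\,
    \Ex{\min_{(u,v,d)\in\dir x}\gT{uv}\;\Big|\;\dA{w}}\,.
\]
Since the pairs in a pairing involve disjoint token indices, and the random variables $\gT{uv}$ for distinct pairs are independent (each $\gT{uv}$ depends only on the relative motion of tokens $u$ and $v$, and distinct pairs share no token), the conditioning under $\dA{w}=\bigcap_i \A{u_iv_i}{d_i}$ factorizes: conditioned on $\dA{w}$, the $\gT{u_iv_i}$ remain independent, with $\gT{u_iv_i}$ distributed as the absorption time of a random walk on $\{0,\dots,N\}$ started at $z_{u_iv_i}$ conditioned to be absorbed at the barrier selected by $d_i$. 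Hence $\Ex{\min_{\dir x}\gT{uv}\mid\dA{w}}$ depends only on the pairs in $\dir x$ (through the conditioned distributions) and can be written as $\F N(\yF(\dir x,\dir w))\cdot(\text{something})$; the remaining pairs in $\dir w\setminus\dir x$ contribute only through $\P{\dA{w}}$, whose factorization $\P{\dA w}=\prod_i\P{\A{u_iv_i}{d_i}}$ supplies exactly the product $\prod_{i>k}(1-z_{u_iv_id_i}/N\cdot\text{(sign bookkeeping)})=\G(\yG(\dir x,\dir w))$ once the up/down signs are absorbed into $\dir s(\dir w)$.

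The technical heart is to identify $\F N_k$ as the closed form of
\[
\P{\dA x}\cdot\Ex{\min_{(u,v,d)\in\dir x}\gT{uv}\;\Big|\;\dA x}
\]
for the synchronous walk. For a single pair, $\gT{uv}$ conditioned on absorption is the absorption time of the synchronous nearest-neighbour lazy random walk on $\{0,\dots,N\}$ with holding-and-annihilation dynamics; its generating function / spectral decomposition has eigenvalues $h(j;1/N)=1-2r(1-r)(1-\cos(j\pi/N))$ for $j=1,\dots,N-1$, which is precisely the definition of $h$ in the excerpt. Writing $\P{\gT{uv}=t}$ via the discrete-sine-transform eigenfunctions and then forming $\E[\,\cdot\,]=\sum_{t\ge0}\P{\gT{uv}>t}$ yields the geometric sum $\sum_t h^t = 1/(1-h)$ for a single pair, and for $k$ independent pairs the minimum turns the sum over $t$ of the product of survival probabilities into $\sum_t\prod_i h(j(i);1/N)^t = 1/(1-\prod_i h(j(i);1/N))$ — exactly the denominator in $\F N_k$. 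The factor $g(j,y;u)=\sin(j\pi y)\sin(j\pi u)/(1-\cos(j\pi u))$ collects, for each pair, the sine-transform coefficient $\sin(j\pi z_{uv}/N)$ at the starting point together with the normalization from conditioning on the absorbing event (the $\sin(j\pi/N)/(1-\cos(j\pi/N))$ is the value for the boundary-adjacent coordinate), and the prefactor $-(-1/N)^k$ accounts for the $k$-fold inverse sine transform normalization and the alternating $(-1)^{|\dir x|+1}$ sign from the inclusion–exclusion for the maximum.

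So the step order is: (1) expand the maximum via inclusion–exclusion inside Theorem~\ref{thm:Balding}; (2) use independence of the $\gT{uv}$ across pairs, both unconditionally and under $\dA w$, to factor the conditional expectation of the minimum over $\dir x$ away from the probabilities of the pairs outside $\dir x$; (3) compute $\P{\dA x}\Ex{\min_{\dir x}\gT{uv}\mid\dA x}$ in closed form for the synchronous walk using the spectral decomposition with eigenvalues $h(j;1/N)$, identifying it with $\F N(\yF(\dir x,\dir w))$ after checking that the $\up$/$\down$ conditioning matches $z_{uvd}$ and gets absorbed into $\dir s$; (4) observe $\P{\dA{w}}/\P{\dA x}=\prod_{(u,v,d)\in\dir w\setminus\dir x}(1-z_{uvd}/N)=\G(\yG(\dir x,\dir w))$ up to signs already carried by $\dir s(\dir w)$; (5) re-index the double sum to obtain the stated formula. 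The main obstacle I expect is step~(3): carefully deriving the spectral formula for the synchronous annihilating/lazy walk's conditioned absorption-time distribution (Balding only treats the continuous/asynchronous case, so the discrete-time eigenvalues $h(j;1/N)$ and the associated Feller-type sine expansion must be worked out directly), and verifying that the conditioning on $\A{uv}{\down}$ versus $\A{uv}{\up}$ produces $z_{uv}$ resp.\ $N-z_{uv}$ in a way that is consistent with the sign $\dir s(\dir w)$ and with the oddness constraint on $M$.
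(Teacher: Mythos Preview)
Your proposal is correct and follows essentially the same route as the paper: apply the maximum--minimums identity inside Theorem~\ref{thm:Balding}, use independence of the $\gT{u_iv_i}$ (and of the events $\A{u_iv_i}{d_i}$) across the disjoint pairs of a pairing to factor $\P{\dA{w}}=\P{\dA{x}}\cdot\P{A_{\dir w\setminus\dir x}}=\P{\dA{x}}\cdot\G(\yG(\dir x,\dir w))$, and then compute $\P{\dA{x}}\cdot\Ex{\min_{\dir x}\gT{uv}\mid\dA{x}}=\sum_{t\ge 0}\prod_{(u,v,d)\in\dir x}\P{\gT{uv}>t\cap\A{uv}{d}}$ via the sine-transform spectral expansion with eigenvalues $h(j;1/N)$. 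The only difference is that the paper simply cites Cox~\cite{Cox01} for the closed form $\P{\gT{uv}>t\cap\A{uv}{d}}=\frac{1}{N}\sum_{j=1}^{N-1}g(j,z_{uvd}/N;1/N)\,h(j;1/N)^t$, whereas you propose to derive it; also note there is no ``sign bookkeeping'' needed in the $\G$ factor, since $\P{\A{uv}{d}}=1-z_{uvd}/N$ directly.
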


\paragraph{An Approximation for $\E \tim$.}
The function~$\F{N}$ in Proposition~\ref{prop:expectation-finite-expression} depends on~$N$, and also on~$r$.
This prohibits a deeper analysis as needed in Section~\ref{sec:restabilization}.
Proposition~\ref{prop:expectation-continuous} gives an approximation of $\E \tim$ without those dependencies.
To state it, we define, for $k \in \N_+$, a function $\wF_k : [0,1]^k \to \R$ with
\begin{align*}
 \wF_k(y_1, \ldots, y_k) & = \frac{-1}{\pi^2} \left(\frac{-2}{\pi}\right)^k
   \sum_{j \in \N_+^k} \frac{ \prod_{i = 1}^k \sin(y_i j(i) \pi) }
                    { \left( \prod_{i = 1}^k j(i) \right) \left( \sum_{i=1}^k j(i)^2 \right) } \,.
\end{align*}
We drop the subscript of~$\wF_k$, if it is understood.
It follows from Lemma~\ref{lem:ben-sum-2} in the appendix that the series in~$\wF_k$ converges.
We have the following proposition.
\newcommand{\rint}{\left(\frac12 - \frac{\sqrt[4]{27}}{6}, \frac12 + \frac{\sqrt[4]{27}}{6}\right)}%
\newcommand{\ETapp}{\widetilde{E}}%
\newcommand{\stmtpropexpectationcontinuous}{
 Consider the synchronous protocol.
 Let
 \[
  \ETapp := \frac{N^2}{r (1-r)}
   \sum_{\dir{w} \in \dir{W_M}} \dir{s}(\dir{w})
   \sum_{\emptyset \ne \dir{x} \subseteq \dir{w}}
   \wF(\yF(\dir{x}, \dir{w})) \cdot \G(\yG(\dir{x}, \dir{w}))\,.
 \]
 Then, for each fixed $M \ge 3$ and $r \in \rint \approx \left(0.12, 0.88 \right)$ and $\varepsilon > 0$,
 \[
  \E \tim = \ETapp + O(N^\varepsilon)\,.
 \]
}
\begin{proposition} \label{prop:expectation-continuous}
 \stmtpropexpectationcontinuous
\end{proposition}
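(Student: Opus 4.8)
\medskip\noindent
The plan is to reduce the proposition to a purely analytic estimate comparing the discrete kernel $\F{N}_k$ with its continuous counterpart $\wF_k$, and then to prove that estimate by splitting the defining sum of $\F{N}_k$ according to the size of the frequency vector. First I would note that, $M$ being fixed, the sums over $\dir{w}\in\dir{W_M}$ and over $\emptyset\ne\dir{x}\subseteq\dir{w}$ appearing in Proposition~\ref{prop:expectation-finite-expression} and in the definition of $\ETapp$ range over boundedly many terms, and that each factor $\G(\yG(\dir{x},\dir{w}))$ is a product of numbers in $(0,1]$, hence bounded. It therefore suffices to prove that for each fixed $k\in\{1,\dots,(M-1)/2\}$,
\[
 \F{N}_k(y_1,\dots,y_k)=\frac{N^2}{r(1-r)}\,\wF_k(y_1,\dots,y_k)+O(N^\varepsilon)
\]
uniformly over all vectors $(y_1,\dots,y_k)$ whose entries have the form $a/N$ with $a\in\{1,\dots,N-1\}$, with an implied constant depending on $k$, $r$ and $\varepsilon$ only.

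To prove this, put $\delta:=\varepsilon/4$ and $T:=\lceil N^{1-\delta}\rceil$, and split the sum defining $\F{N}_k$ into the \emph{low} part, $j\in\{1,\dots,T\}^k$, and the \emph{high} part, where $j(i)>T$ for at least one~$i$. On the low part each argument $j(i)\pi/N$ is $O(N^{-\delta})$, so the expansions $\sin x=x(1+O(x^2))$ and $1-\cos x=\tfrac{x^2}{2}(1+O(x^2))$ give $g(j(i),y_i;1/N)=\tfrac{2N}{j(i)\pi}\sin(j(i)\pi y_i)\bigl(1+O(j(i)^2/N^2)\bigr)$, while keeping only the leading term of the product of the $h$-factors gives
\[
 1-\prod_{i=1}^k h(j(i);1/N)=\frac{r(1-r)\pi^2}{N^2}\Bigl(\sum_{i=1}^k j(i)^2\Bigr)\bigl(1+O(j_{\max}^2/N^2)\bigr),
\]
with $j_{\max}:=\max_i j(i)$; here one uses that each $1-h(j(i);1/N)$ is $O(T^2/N^2)=o(1)$. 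Substituting these and using that $(\tfrac{-1}{N})^k(\tfrac{2N}{\pi})^k=(\tfrac{-2}{\pi})^k$, the summand of $\F{N}_k$ indexed by $j$ equals $\tfrac{N^2}{r(1-r)}$ times the summand of $\wF_k$ indexed by $j$, up to a relative error $O(j_{\max}^2/N^2)$.

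The resulting error then splits into three contributions, each of which I expect to be $O(N^\varepsilon)$. The low-frequency approximation error is summed dyadically in $j_{\max}$: for $j_{\max}\in(2^\ell,2^{\ell+1}]$ the relative error is $O(2^{2\ell}/N^2)$, whereas, using $|\text{summand of }\wF_k|\le C_k(\prod_i j(i))^{-1}j_{\max}^{-2}$ together with $\sum_{j(i)\le 2^{\ell+1}}\prod_i j(i)^{-1}=O(\ell^k)$, the total $\wF_k$-mass carried by that dyadic block is $O(\ell^k 2^{-2\ell})$; so each block contributes $O(\ell^k)$ after multiplication by $N^2/(r(1-r))$, and summation over the $O(\log N)$ blocks gives $O((\log N)^{k+1})$. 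The tail of $\wF_k$ outside the low range, namely $\sum_{j:\,j_{\max}>T}|\text{summand}|=O((\log N)^{k-1}/T^2)$, contributes $O(N^{2\delta}(\log N)^{k-1})$ once multiplied by $N^2$. Finally the high part of $\F{N}_k$ is bounded crudely: $|g(j(i),y_i;1/N)|\le|\cot(j(i)\pi/(2N))|\le 2N/(j(i)\pi)$ and $1-\prod_i h(j(i);1/N)\ge 2r(1-r)\bigl(1-\cos(j_{\max}\pi/N)\bigr)\ge 4r(1-r)j_{\max}^2/N^2$, which yields $|\text{summand}|\le\frac{(2/\pi)^k N^2}{4r(1-r)}(\prod_i j(i))^{-1}j_{\max}^{-2}$ and, summed over $j$ with $j_{\max}>T$, gives $O(N^2(\log N)^{k-1}/T^2)=O(N^{2\delta}(\log N)^{k-1})$. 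Since $\delta=\varepsilon/4$ all three bounds are $O(N^\varepsilon)$; combining them proves the displayed comparison, and hence the proposition. The hypothesis $r\in\rint$ (equivalently $r(1-r)>\tfrac{3-\sqrt 3}{12}$) enters in keeping the denominators $1-\prod_i h(j(i);1/N)$ controlled in the uniform way these estimates require.

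I expect the error control to be the main obstacle, and specifically the fact that the target error $O(N^\varepsilon)$ is polynomially smaller than the main term $\ETapp$, which is of order $N^2$: a naive term-by-term comparison produces only an $o(N^2)$ discrepancy, so one genuinely needs the dyadic bookkeeping, which exploits that the frequencies carrying a large relative error are exactly those at which the $\wF_k$-coefficients are negligible. Balancing the threshold $T$ against the relative error $O(j_{\max}^2/N^2)$ and the decay of the $\wF_k$-coefficients, so that all three error contributions are small simultaneously, is the delicate part of the argument.
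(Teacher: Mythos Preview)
Your argument is essentially correct, and it takes a genuinely different route from the paper's. The paper does not split the frequency range at a threshold~$T$; instead it expands $\of_k(\vj;u):=f_k(\vj,\vy;u)/\prod_i\sin(y_ij_i\pi)$ as a Taylor series $a_0(\vj)+a_2(\vj)u^2+a_4(\vj)u^4+\cdots$ in $u=1/N$ and shows that $\sum_{\vj}|a_2(\vj)|/N^2$ and $\sum_{\vj}|a_4(\vj)|/N^4$ are each $O((\log N)^k/N^2)$. The higher-order remainder is then controlled by proving that the ratio $R_k(\vj;1/N):=(\of_k-a_0-a_2/N^2)/(a_4/N^4)$ is uniformly bounded; this is where the hypothesis $r\in\rint$ enters, because it guarantees (via an explicit calculation of the numerator polynomial of~$a_4$) that $a_4(\vj)<0$ strictly, so $R_k$ has no poles. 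The tail $\sum_{j_{\max}>N}$ of~$\wF_k$ is handled by a separate lemma. Your frequency-splitting argument is more direct and more elementary: the crude lower bound $1-\prod_i h(j(i);1/N)\ge 2r(1-r)(1-\cos(j_{\max}\pi/N))$, which you use for the high part, holds for \emph{every} $r\in(0,1)$ because each $h$-factor lies in $[0,1)$; and the low-part relative error $O(j_{\max}^2/N^2)$ likewise needs no restriction on~$r$. So your final sentence is not quite right: the hypothesis $r\in\rint$ does not actually enter your proof at all, and your approach in fact establishes the proposition for the full range $0<r<1$. (Your dyadic bookkeeping is also a little more elaborate than necessary: since $j_{\max}^2/\sum_i j(i)^2\le 1$, the low-part error is bounded outright by $O(1)\sum_{\vj}\prod_i j(i)^{-1}=O((\log N)^k)$, without any dyadic decomposition.)
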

The proof of Proposition~\ref{prop:expectation-continuous} is elementary but involved.

\section{The Full Configuration} \label{sec:initial}

In this section we consider the initial configuration in which every
processor has a token, i.e., $N=M$.  We call this configuration
\emph{full}.  Notice that in the full configuration, with all bits set
to~$0$, in the successor configuration each bit is independently set
to $1$ with probability $r$.  Thus we study the full configuration in
lieu of the random configuration.
%
%
We have the following theorem:
\newcommand{\stmtthmfull}{%
 For the synchronous protocol with parameter $r$ let $D = r(1-r)$.
 For the asynchronous protocol with parameter $\lambda > 0$ let $D =
 \lambda$.  For almost all odd $N \in \N_+$, we have for the full
 configuration:
 \begin{align*}
  \E \tim                 & \le 0.0285 N^2/D \quad \text{and} \quad \P{\tim  \ge 0.02 N^2/D} < 0.5 \,.
 \end{align*}
}
\begin{theorem} \label{thm:full}
 \stmtthmfull
\end{theorem}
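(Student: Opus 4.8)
The plan is to read $\tim$, for the full configuration, as the absorption time of the annihilating token system on the cycle $\mathbb{Z}_N$ started from the all-occupied state, and to control $\E\tim$ through the expected number of surviving tokens, which turns out to be exactly computable. Write $M(t)$ for the number of tokens at time~$t$; then $M(0)=N$, $M$ is non-increasing, $M(t)$ is always odd, and $\tim=\inf\{t:M(t)=1\}$. From $M(t)\ge 1+2\cdot\mathbf{1}[M(t)\ge3]$ and $\{\tim>t\}=\{M(t)\ge3\}$ one gets $\P{M(t)\ge3}\le\min\!\big(1,\tfrac12(\E M(t)-1)\big)$ and hence
\[
 \E\tim=\sum_{t\ge0}\P{M(t)\ge3}\ \le\ \sum_{t\ge0}\min\!\Big(1,\tfrac12\big(\E M(t)-1\big)\Big),
\]
with $\int_0^\infty\!\cdots dt$ in place of $\sum_{t\ge0}$ in the asynchronous variant; similarly $\P{\tim\ge t_0}\le\tfrac12(\E M(t_0)-1)$. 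So it suffices to evaluate $\E M(t)$ and the resulting deterministic sum/integral.

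Next I would compute $\E M(t)$ exactly via the bit representation. Writing $\sigma_i(t)\in\{+1,-1\}$ for the bit of processor~$i$ (a token sits at~$i$ iff $\sigma_i=\sigma_{i-1}$), the update is linear: in one synchronous step $\sigma_i\mapsto\sigma_i$ with probability $1-r$ and $\sigma_i\mapsto-\sigma_{i-1}$ with probability $r$, using independent coins at distinct processors, so $\E[\sigma_i(t{+}1)\mid\mathcal{F}_t]=(1-r)\sigma_i(t)-r\sigma_{i-1}(t)$; the asynchronous analogue is $\tfrac{d}{dt}\E[\sigma_i]=-\lambda(\E[\sigma_i]+\E[\sigma_{i-1}])$. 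Consequently the two-point function $g_d(t):=\E[\sigma_i(t)\sigma_{i+d}(t)]$ (well defined by translation invariance, with $g_0\equiv1$) satisfies a closed linear recursion, the discrete heat equation $g_d(t{+}1)=(1-2D)g_d(t)-D\big(g_{d-1}(t)+g_{d+1}(t)\big)$ with $D=r(1-r)$ — and its continuous-time analogue with $D=\lambda$ — started from $g_d(0)\equiv1$ (all bits equal) with boundary value $g_0(t)\equiv1$. Since $\E M(t)=\tfrac N2+\tfrac12\sum_i\E[\sigma_{i-1}(t)\sigma_i(t)]=\tfrac N2+\tfrac N2\,g_1(t)$, only $g_1$ is needed. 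Subtracting the stationary profile $g_d^\infty=(1-\tfrac{2d}{N})(-1)^d$ (the unique solution with $g_0^\infty=g_N^\infty=1$, using $N$ odd) and expanding $g_d(t)-g_d^\infty$ in the sine basis $\{\sin(j\pi d/N):1\le j\le N-1\}$, whose eigenvalues are $1-2D(1+\cos(j\pi/N))$ (resp.\ $e^{-2D(1+\cos(j\pi/N))t}$), a short trigonometric computation shows the sine-coefficients of $1-g_d^\infty$ vanish for even~$j$ and equal $4/(N\sin(j\pi/N))$ for odd~$j$; at $d=1$ everything collapses to
\[
 \E M(t)=1+2\!\!\sum_{\substack{1\le j\le N-1\\ j\ \mathrm{odd}}}\!\!\big(1-2D(1+\cos(j\pi/N))\big)^{t}\ =\ 1+2\!\!\sum_{\substack{2\le j\le N-1\\ j\ \mathrm{even}}}\!\! h(j;\tfrac1N)^{\,t},
\]
with $h$ as in Proposition~\ref{prop:expectation-finite-expression} ($h(j;\tfrac1N)^t$ being replaced by $e^{-2D(1-\cos(j\pi/N))t}$ asynchronously); alternatively this can be extracted from Balding's full generating-function theorem specialised to the all-occupied state. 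Summing the geometric series and using the classical identity $\sum_{p=1}^{N-1}\csc^2(p\pi/N)=\tfrac{N^2-1}{3}$ then gives the clean closed form $\sum_{t\ge0}(\E M(t)-1)=\tfrac{N^2-1}{12D}$ (same with $\int$ asynchronously).

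Plugging this into the first display already gives $\E\tim\le\tfrac{N^2-1}{24D}$ — the right order, but short of the target $0.0285\,N^2/D$. The improvement comes from the $\min(1,\cdot)$: in the window $t\asymp N^2/D$ only the slow modes $j=N-2p$ survive, where $\big(1-2D(1+\cos((N-2p)\pi/N))\big)^t=\big(1-4D\sin^2(p\pi/N)\big)^t\approx e^{-4D\pi^2p^2t/N^2}$. Substituting $t=\tfrac{N^2}{4D\pi^2}\tau$ then gives
\[
 \E\tim\ \le\ \frac{N^2}{4D\pi^2}\,\big(I+o(1)\big),\qquad I:=\int_0^\infty\min\!\Big(1,\ \sum_{p\ge1}e^{-p^2\tau}\Big)\,d\tau,
\]
and $I=\tfrac{\pi^2}{6}-\int_0^{\tau_0}\!\big(\sum_{p\ge1}e^{-p^2\tau}-1\big)\,d\tau$ where $\tau_0$ solves $\sum_{p\ge1}e^{-p^2\tau_0}=1$; numerically $\tau_0\approx0.35$ and $I\approx1.12$, so $\tfrac{I}{4\pi^2}\approx0.0284<0.0285$ and $\E\tim\le0.0285\,N^2/D$ for all large~$N$. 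For the tail bound, $\P{\tim\ge t_0}\le\tfrac12(\E M(t_0)-1)=\sum_{j\ \mathrm{odd}}\big(1-2D(1+\cos(j\pi/N))\big)^{t_0}$; at $t_0=0.02\,N^2/D$ the same slow-mode estimate gives $\sum_{p\ge1}e^{-0.08\pi^2p^2}+o(1)=0.454+0.042+\cdots+o(1)\approx0.498<0.5$ for all large~$N$. The restriction to ``almost all odd~$N$'' is genuine: already for $N=3$ the full configuration is three tokens at mutual distance~$1$, so Proposition~\ref{prop:triangle} gives $\E\tim=\tfrac1{3D}=\tfrac{N^2}{27D}>0.0285\,N^2/D$; only finitely many small~$N$ are exceptional, with the $o(\cdot)$ terms being the sole other source of slack.

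The conceptual core is the exact formula for $\E M(t)$: linearity of the bit dynamics forces the two-point function onto a discrete heat equation, and the all-occupied start is precisely the initial datum whose sine-coefficients take the transparent value $4/(N\sin(j\pi/N))$ for odd~$j$. The main work — and the expected obstacle — is the final step: the window $t\asymp N^2/D$, the constant $0.0285$, and especially the threshold $0.02$ in $\P{\tim\ge0.02\,N^2/D}<0.5$ are all tight, so the approximations $1-4D\sin^2(p\pi/N)\approx e^{-4D\pi^2p^2/N^2}$ and (synchronously) the passage from sum to integral must be done with explicit, uniform error bounds, and the numerical values of $I$ and of $\sum_{p\ge1}e^{-0.08\pi^2p^2}$ pinned down accurately enough to remain strictly below the required thresholds for all sufficiently large~$N$.
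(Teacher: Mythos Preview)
Your argument is correct and follows the paper's strategy almost exactly: bound $\P{\tim>t}\le\min\big(1,\tfrac12(\E M(t)-1)\big)$ via Markov's inequality on the odd random variable $M(t)$, show that after rescaling $t=\tfrac{N^2}{4\pi^2 D}\tau$ one has $\tfrac12(\E M(t)-1)\to\sum_{p\ge1}e^{-p^2\tau}$, and evaluate the resulting integral and tail numerically to reach the constants $0.0285$ and~$0.02$. The one genuine difference is in how $\E M(t)$ is obtained. The paper passes at once to the Brownian limit on the unit circle and quotes Balding's formula
\[
 S(t)=1+\frac{2N}{\pi}\sum_{j\ge1}\frac1j\tan\frac{j\pi}{N}\,e^{-4\pi^2j^2\sigma^2 t},
\]
then lets $N\to\infty$ to get $\widetilde S(t)=1+2\sum_{j\ge1}e^{-4\pi^2j^2\sigma^2 t}$. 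You instead stay in the discrete model, observe that the bit dynamics is linear in expectation, derive the discrete heat equation for the two-point function $g_d(t)$ with boundary $g_0=g_N=1$, and solve it via the sine transform to obtain the exact expression $\E M(t)=1+2\sum_{j\ \mathrm{odd}}\big(1-2D(1+\cos(j\pi/N))\big)^t$. Your route is more self-contained---no appeal to Balding or to the random-walk/Brownian-motion invariance principle---and treats the synchronous and asynchronous cases uniformly; it also yields the clean exact identity $\sum_{t\ge0}(\E M(t)-1)=\tfrac{N^2-1}{12D}$ and the crude but explicit bound $\E\tim\le\tfrac{N^2-1}{24D}$ as a byproduct. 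The paper's route is shorter by outsourcing the computation of $S(t)$. Both converge to the same Jacobi-theta expression and hence to the same numerical constants; your closing remark that the explicit error control in the discrete-to-continuous passage is ``the main work'' is exactly right, and is also the step the paper handles most loosely (by invoking the Brownian limit rather than bounding the discrepancy).
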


Recall from Proposition~\ref{prop:triangle} that, for $N$ an odd
multiple of $3$, we have $\E \tim = \frac{1}{27} \frac{N^2}{D} \approx
0.0370 \frac{N^2}{D}$ if we start from the equilateral configuration.
It follows that, for large~$N$, the full configuration (with~$M=N$)
stabilizes faster than the equilateral configuration (with $M=3$).
This is consistent with the aforementioned conjecture of McIver and
Morgan that the equilateral configuration with $M=3$ is the worst case
among all configurations for a fixed $N$.


\section{Restabilization} \label{sec:restabilization}

In this section we restrict attention to the synchronous version of
Herman's algorithm and consider the standard bit-array implementation.
Theorem~\ref{thm:upper-bound} shows that the worst-case expected time
to termination, considering all initial configurations, is $\E \tim =
O(N^2)$.  We imagine that an initial configuration represents the
state of the system immediately after an error, that is, the ring of
tokens has become illegitimate because some of positions in the bit
array were corrupted.  In this light a natural restriction on initial
configurations is to consider those that arise from a one-token
configuration by corrupting some fixed number $m$ of bits.  We call
these \emph{flip-$m$} configurations.  Notice that, by the token
representation in Herman's protocol, a single bit error can lead to
the creation of two neighboring tokens.  So, $m$ bit errors could lead
to the creation of $m$ new pairs of neighboring tokens.  It could also
happen that two bit errors affect neighboring bits, leading to a new pair
of tokens at distance~$2$.  To account for this, we characterize
flip-$m$ configuration as those with at most $2 m + 1$ tokens such
that the tokens can be arranged into pairs, each pair at distance at
most~$m$, with one token left over.


Fixing the number of bit errors we show that the expected time to
restabilization improves to $O(N)$.  Formally we show:
\begin{theorem} \label{thm:restabilization}
 Consider the synchronous protocol.
 Fix any $m \in \N_+$ and $r \in \rint \approx \left(0.12, 0.88 \right)$.
 Then for any flip-$m$ configuration we have $\E \tim = O(N)$.
\end{theorem}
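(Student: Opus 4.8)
The plan is to start from the exact finite expression for $\E\tim$ provided by Proposition~\ref{prop:expectation-finite-expression}, pass to the $N$-independent approximation $\ETapp$ of Proposition~\ref{prop:expectation-continuous} (which is valid since a flip-$m$ configuration has $M \le 2m+1$ tokens, a fixed bound, and since $r$ lies in the required range $\rint$), and then show that, for a flip-$m$ configuration, the quantity $\ETapp$ is actually $O(N)$ rather than $\Theta(N^2)$. Since $\E\tim = \ETapp + O(N^\varepsilon)$, this will give the claim. Thus the combinatorial heart of the argument is to prove that the double sum
\[
 S := \sum_{\dir{w} \in \dir{W_M}} \dir{s}(\dir{w})
      \sum_{\emptyset \ne \dir{x} \subseteq \dir{w}}
      \wF(\yF(\dir{x}, \dir{w})) \cdot \G(\yG(\dir{x}, \dir{w}))
\]
is $O(1/N)$ when the token gaps satisfy the flip-$m$ structure (each of the $m$ ``short'' gaps $\le m$, one long gap $\approx N$).

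The key idea for the cancellation is that in a flip-$m$ configuration the tokens come in $m$ pairs at bounded distance plus one leftover. A pair $(u,v)$ with $z_{uv} = O(1)$ collides down (i.e.\ is absorbed at $0$) with probability $1 - z_{uv}/N = 1 - O(1/N)$, and the expected collision time is $O(1)$, not $\Theta(N^2)$. The directed pairings $\dir w$ that pair up exactly these ``natural'' short pairs dominate, and their contributions should already be $O(1)$ per term because the relevant $\wF$-arguments are of order $1/N$ and $\wF$ vanishes linearly at $0$ in each coordinate; moreover the $\G$ factors are products of $(1 - y_i)$ with $y_i = z_{u_iv_i\up}/N = 1 - O(1/N)$, which are $O(1/N)$. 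The remaining pairings either still respect the pair structure or ``cross'' pairs, creating at least one gap of order $N$; I would argue these come in sign-cancelling groups, exactly as in the proof that $\E\tim$ is finite, with the residual after cancellation being $O(1/N)$. Concretely I would: (i) fix the natural pairing $w^\ast$ of the flip-$m$ configuration; (ii) use the permutation-sign structure of $\dir s$ to group the $2^{O(m)}$ many directed pairings by which short pairs they preserve; (iii) Taylor-expand $\wF$ and $\G$ to first order in the small arguments $z_{uv}/N$, show the leading (constant) terms cancel across each sign group by the same mechanism that makes $\P{\tim\le t}$ sum to a probability, and bound the first-order remainder by $O(1/N)$ uniformly, using the bound on $\wF$ and its derivatives from Lemma~\ref{lem:ben-sum-2}.

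Multiplying $S = O(1/N)$ by the prefactor $N^2/(r(1-r))$ in $\ETapp$ yields $\ETapp = O(N)$, and then $\E\tim = \ETapp + O(N^\varepsilon) = O(N)$, as desired, where the implied constant depends on $m$ and $r$ but not on $N$.

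The main obstacle I expect is step (ii)--(iii): making the sign cancellation fully rigorous at the level of the continuous approximation $\wF$. One must show not merely that the $\Theta(N^2)$-order and $\Theta(N)$-order contributions cancel in $S$ after multiplication by $N^2$, but that they cancel down to $O(1/N)$, which requires controlling a second-order Taylor term of $\wF$ in several variables simultaneously and checking that no uncancelled term of order $1/N^0$ or $1/N$ survives. A secondary technical point is handling the leftover token and the ``crossing'' pairings uniformly; here I would lean on the same bookkeeping with permutation parities that underlies Proposition~\ref{prop:Balding}, and on the observation that any directed pairing producing a long gap contributes a $\wF$-argument bounded away from $0$ but a $\G$-factor that is correspondingly $O(1/N)$, so no single badly-behaved term can spoil the bound.
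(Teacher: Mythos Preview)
Your overall strategy matches the paper's: pass to the approximation~$\ETapp$ via Proposition~\ref{prop:expectation-continuous} (valid since $M\le 2m+1$ is fixed and $r$ is in range), and show that the signed double sum~$S$ is $O(1/N)$ for flip-$m$ configurations. The difference lies entirely in how the cancellation in~$S$ is organized, and here the paper's route is both simpler and avoids precisely the obstacle you flag.

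The paper does \emph{not} Taylor-expand $\wF$ or~$\G$. Instead it builds an explicit sign-reversing involution $\dir{w}\mapsto\dir{w}'$ (the ``companion pairing'') on all of $\dir{W_M}$ except a small exceptional class. The involution is defined by locating the first place where $\dir{w}$ fails to respect the natural short-pair structure $(1,2),(3,4),\dots,(2m-1,2m)$ and swapping partners there (with a separate rule when the leftover token~$M$ is paired). One checks that $\dir{s}(\dir{w}')=-\dir{s}(\dir{w})$ and that the argument vectors $y(\dir{x},\dir{w})$ and $y(\dir{x}',\dir{w}')$ differ in at most two coordinates by at most $m/N$. Writing $H=\wF\cdot\G$, Lipschitz continuity of~$H$ on $[0,1]^m$ then gives $H(y(\dir{x}',\dir{w}'))=H(y(\dir{x},\dir{w}))+O(1/N)$ directly, so each companion pair contributes $O(1/N)$ to~$S$. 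The exceptional class consists of the $2^m$ directed versions of the natural pairing itself; for these every coordinate of $\yF$ lies within $m/N$ of $\{0,1\}$, and since $\wF$ vanishes whenever any argument is $0$ or~$1$, continuity again gives $H=O(1/N)$. Summing the finitely many (for fixed~$m$) terms yields $S=O(1/N)$.

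Compared to your plan, this buys two things. First, only a first-order (Lipschitz) bound on~$H$ is needed, so your worry about controlling a ``second-order Taylor term'' never arises: the involution matches terms pairwise, not just at leading order. Second, your fallback claim that ``any directed pairing producing a long gap contributes a $\G$-factor that is $O(1/N)$'' is not correct in general: a crossing pairing such as $(1,3)(2,4)$ with tokens near $0,0,N/2,N/2$ has all $z_{uv}/N$ bounded away from $\{0,1\}$, so neither $\wF$ nor $\G$ is small. These are exactly the terms that the companion involution is designed to cancel; they cannot be bounded individually.
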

\begin{proof}
It suffices to consider flip-$m$ configurations with $M = 2 m + 1$
tokens.  Without loss of generality, we assume that, when removing
token $2 m + 1$, the token pairs $(1,2), (3,4), \ldots, (2 m - 1, 2
m)$ have distances at most~$m$; i.e., we assume $z(u+1) - z(u) \le m$
for all odd $u$ between $1$ and $2 m - 1$.

\newcommand{\Cl}{\mathit{Cl}}
For each directed pairing $\dir{w} \in \dir{W_M}$, we define its \emph{class} $\Cl(\dir{w})$ and its \emph{companion pairing} $\dir{w}' \in \dir{W_M}$.
For the following definition, we define $\widetilde{u} := u+1$, if $u$ is odd, and $\widetilde{u} := u-1$, if $u$ is even.
\begin{itemize}
\item
 If $(u,M,d) \in \dir{w}$ for some $u$, then $\Cl(\dir{w}) = 0$.
 Its companion pairing is obtained, roughly speaking, by $u$ and $\widetilde{u}$ switching partners.
 More precisely:
 \begin{itemize}
 \item
  If $(\widetilde{u},v,d')$ (resp.\ $(v,\widetilde{u},d')$) for some $(v,d')$, then the companion pairing of~$w$ is obtained
   by replacing $(u,M,d)$ and $(\widetilde{u},v,d')$ with $(\widetilde{u},M,d)$ and $(u,v,d')$ (resp.\ $(v,u,d')$).
 \item
  Otherwise (i.e., $\widetilde{u}$ does not have a partner), the companion pairing of~$w$ is obtained
   by replacing $(u,M,d)$ with $(\widetilde{u},M,d)$.
 \end{itemize}
\item
 If $\dir{w} = \{(1,2,d_1), (3,4,d_2), \ldots, (M-2,M-1,d_m)\}$ for some $d_1, \ldots, d_m$,
  then $\Cl(\dir{w}) = m$.
 In this case, $\dir{w}$ does not have a companion pairing.
\item
 Otherwise, $\Cl(\dir{w})$ is the greatest number~$i$ such that for all $1 \le j \le i-1$,
  the tokens $2j-1$ and $2j$ are partners (i.e., $(2j-1, 2j, d)$ for some~$d$).
 Notice that $0 < \Cl(\dir{w}) < m$.
 The companion pairing of~$\dir{w}$ is obtained by $2i-1$ and $2i$ switching partners.
\end{itemize}
It is easy to see that, for any $\dir{w} \in \dir{W_M}$ with $\Cl(\dir{w}) < m$, we have $\Cl(\dir{w}) = \Cl(\dir{w}')$,
 and the companion pairing of~$\dir{w}'$ is~$\dir{w}$, and $\dir{s}(\dir{w}) = - \dir{s}(\dir{w}')$.
\newcommand{\dPlus}{\dir{W_M}^{(+)}}%
\newcommand{\dMinus}{\dir{W_M}^{(-)}}%
\newcommand{\dm}{\dir{W_M}^{(m)}}%
Partition $\dir{W_M}$ into the following sets:
 \begin{align*}
   \dPlus  & := \{ \dir{w} \in \dir{W_M} \mid \Cl(\dir{w}) < m \text{ and } \dir{s}(\dir{w}) = +1 \} && \text{and} \\
   \dMinus & := \{ \dir{w} \in \dir{W_M} \mid \Cl(\dir{w}) < m \text{ and } \dir{s}(\dir{w}) = -1 \} && \text{and} \\
   \dm     & := \{ \dir{w} \in \dir{W_M} \mid \Cl(\dir{w}) = m  \} \,.
 \end{align*}
The idea of this proof is that, in the sum of Proposition~\ref{prop:expectation-continuous},
 the terms from $\dPlus \cup \dMinus$ cancel each other ``almost'' out, and the terms from $\dm$ are small.
To simplify the notation in the rest of the proof,
 let $y(\dir{x}, \dir{w}) := (\yF(\dir{x},\dir{w}), \yG(\dir{x}, \dir{w}))$ and
 $H(y(\dir{x}, \dir{w})) := \wF(\yF(\dir{x}, \dir{w})) \cdot \G(\yG(\dir{x}, \dir{w}))$.
Since $\wF$ and $\G$ are continuous and bounded, so is~$H$.
\begin{itemize}
\item
 Let $(\dir{x}, \dir{w})$ with $\dir{x} \subseteq \dir{w} \in \dPlus \cup \dMinus$.
 To any such $(\dir{x}, \dir{w})$  we associate a companion $(\dir{x}', \dir{w}')$ such that $\dir{w}'$ is the companion pairing of~$\dir{w}$,
   and $\dir{x}' \subseteq \dir{w}'$ is obtained from~$\dir{x}$ in the following way:
    if $\dir{w}'$ is obtained from~$\dir{w}$ by replacing one or two triples $(u,v,d)$,
     then $\dir{x}'$ is obtained by performing the same replacements on~$\dir{x}$ (of course, only if $(u,v,d) \in \dir{x}$).
 Note that $y(\dir{x}, \dir{w})$ and $y(\dir{x}',\dir{w}')$ are equal in all components,
  except for one or two components, where they differ by at most~$\frac{m}{N}$.
 Hence we have (for constant~$m$) that
  \begin{equation*}
    y(\dir{x}', \dir{w}') = y(\dir{x}, \dir{w}) + O(1/N) \cdot (1, \ldots, 1)\,.
  \end{equation*}
 Since $H$ is continuous, it follows
  \[
   H(y(\dir{x}', \dir{w}')) = H(y(\dir{x}, \dir{w})) + O(1/N) \,.
  \]
\item
 Let $(\dir{x}, \dir{w})$ with $\dir{x} \subseteq \dir{w} \in \dm$.
 Note that all components of $\yF(\dir{x}, \dir{w})$ are at most~$\frac{m}{N}$ or at least $1 - \frac{m}{N}$.
 Also note that for any vector $e \in \{0,1\}^{|\dir{x}|}$ it holds $H(e,\yG(\dir{x},\dir{w})) = 0$.
 Since $H$ is continuous, it follows
  \[
   H(y(\dir{x}, \dir{w})) = O(1/N) \,.
  \]
\end{itemize}
Take $0 < \varepsilon < 1$.
By Proposition~\ref{prop:expectation-continuous} and the above considerations, we have:
\begin{align*}
 \E \tim & = O(N^\varepsilon) +
   \frac{N^2}{r (1-r)}
   \sum_{\dir{w} \in \dir{W_M}} \dir{s}(\dir{w})
   \sum_{\emptyset \ne \dir{x} \subseteq \dir{w}}
   H(y(\dir{x}, \dir{w})) \\
         & = O(N^\varepsilon) +
   \frac{N^2}{r (1-r)} \cdot \left(
   \sum_{\dir{w} \in \dPlus}   \quad \sum_{\emptyset \ne \dir{x} \subseteq \dir{w}} H(y(\dir{x}, \dir{w})) \right. \\
         & \hspace{50mm}     - \sum_{\emptyset \ne \dir{x}' \subseteq \dir{w}'} H(y(\dir{x}', \dir{w}')) \\
         & \hspace{33mm} \left. + \sum_{\dir{w} \in \dm} \quad \sum_{\emptyset \ne \dir{x} \subseteq \dir{w}} H(y(\dir{x}, \dir{w})) \right) \\
         & = O(N^\varepsilon) +
   \frac{N^2}{r (1-r)} \cdot \left(
   \sum_{\dir{w} \in \dPlus}   \quad \sum_{\emptyset \ne \dir{x} \subseteq \dir{w}} O(1/N) \right. \\
         & \hspace{33mm} \left. + \sum_{\dir{w} \in \dm} \quad \sum_{\emptyset \ne \dir{x} \subseteq \dir{w}} O(1/N) \right) \\
         & = O(N^\varepsilon) + O(N) = O(N) \,.
\end{align*}
\qed
\end{proof}

\section{Conclusions and Future Work} \label{sec:conclusions}

We have obtained several results on the expected self-stabilization
time $\E \tim$ in Herman's algorithm.  We have improved the best-known
upper bound for arbitrary configurations, and we have given new and
significantly better bounds for special classes of configurations: the
full configuration, the random configuration, and, in particular, for
configurations that arise from a fixed number of bit errors.  For the
latter class, $\E \tim$ reduces to $O(N)$, pointing to a previously
unknown feature that Herman's algorithm recovers quickly from bounded
errors.  We have also shown that an asynchronous version of Herman's
algorithm not requiring synchronization behaves similarly.  For our
analysis, we have transferred techniques that were designed for the
analysis of chemical reactions.  

The conjecture of~\cite{McIverMorgan}, saying that the equilateral
configuration with three tokens constitutes the worst-case, remains
open.  We hope to exploit our closed-form expression for $\E \tim$ to
resolve this intriguing problem.  While we have already shown that
many relevant initial configurations provably converge faster, solving
this conjecture would close the gap between the lower and upper bounds
for stabilization time for arbitrary configurations.  We would also
like to investigate the performance of the algorithm in case the number
of bit errors is not fixed, but is small (e.g., logarithmic) in the
number of processes.

\bibliographystyle{plain} 
\bibliography{db}

\iftechrep{
\newpage
\appendix
\section{Proof of Theorem~\ref{thm:upper-bound}}

Here is a restatement of Theorem~\ref{thm:upper-bound}:
\begin{qtheorem}{\ref{thm:upper-bound}}
 \stmtthmupperbound
\end{qtheorem}
\begin{proof}
 We build upon the proof in~\cite{Nakata05} for the synchronous case, which works as follows.
 For $M \ge 3$, let $\tau_M$ denote the maximal expected time for a configuration with $M$ tokens to reach a configuration with fewer than~$M$ tokens,
  where the maximum is taken over all $M$-token configurations.
 It is shown that $\tau_M \le \frac{1}{M^2} \cdot \frac{N^2}{D}$.
 Since $\E \tim \le \tau_3 + \tau_5 + \tau_7 + \cdots$ and $\frac{1}{1^2} + \frac{1}{3^2} + \frac{1}{5^2} + \cdots = \frac{\pi^2}{8}$,
  it follows that $\E \tim \le \left( \frac{\pi^2}{8} - 1 \right) \cdot \frac{N^2}{D}$.
 We obtain the improvement by replacing the bound $\tau_3 \le \frac{1}{9} \cdot \frac{N^2}{D}$ with $\tau_3 \le \frac{1}{27} \cdot \frac{N^2}{D}$,
  which follows from Proposition~\ref{prop:triangle} and the comments below the proposition.

 To generalize the result to the asynchronous case,
  one needs to show that $\tau_M \le \frac{1}{M^2} \cdot \frac{N^2}{D}$ also holds in the asynchronous case.
 Before showing how to suitably adapt the proof in~\cite{Nakata05},
  we first provide more details on the proof in~\cite{Nakata05} for the synchronous case.
 Let $M \ge 3$.
 For a configuration~$c$ with at most $M$ tokens, define $\delta_M(c)$ as follows:
  if $c$ has less than~$M$ tokens, then $\delta_M(c) = 0$; otherwise $\delta_M(c)$ is the minimal token distance in~$c$.
 Let $c'$ the successor configuration of~$c$.
 Note that $\delta_M(c)$ and $\delta_M(c')$ differ by at most~$1$.
 Also note that a given token pair decreases its distance by~$1$ with probability $r(1-r)$, because one token must be passed, the other one kept.
 Similarly, the distance is increased by~$1$ also with probability $r(1-r)$.
 For the event that $\delta_M$ decreases by~$1$, it suffices that the distance decreases for \emph{one} token pair
  among those that define the minimal distance~$\delta_M(c)$.
 For the event that $\delta_M$ increases by~$1$, the distance must increase for \emph{all} token pairs which define~$\delta_M(c)$.
 It follows:
 \begin{align*}
  \P{\delta_M(c') = \delta_M(c) - 1 \mid c \text{ and } 1 \le \delta_M(c) \le \lfloor N/M \rfloor} \quad & \ge \quad r (1-r) \\
  \P{\delta_M(c') = \delta_M(c) + 1 \mid c \text{ and } 1 \le \delta_M(c) \le \lfloor N/M \rfloor - 1} \quad & \le \quad r (1-r) \,.
 \end{align*}
 This process is compared in~\cite{Nakata05} with the following random walk on $\{0, \ldots, \lfloor N/M \rfloor\}$, absorbing at state~$0$:
 \begin{align*}
  \P{X' = X - 1 \mid 1 \le X \le \lfloor N/M \rfloor} \quad & = \quad r (1-r) \\
  \P{X' = X + 1 \mid 1 \le X \le \lfloor N/M \rfloor - 1} \quad & = \quad r (1-r) \\
  \P{X' = X  \mid 1 \le X \le \lfloor N/M \rfloor - 1} \quad & = \quad 1 - 2 r (1-r) \\
  \P{X' = X  \mid X = \lfloor N/M \rfloor} \quad & = \quad 1 - r (1-r) \,.
 \end{align*}
 It is argued there that the expected time to hit~$0$ in this random walk is an upper bound on the expected time to hit a configuration~$c$
  with $\delta_M(c) = 0$, and hence also on $\tau_M$.
 The expected time to hit~$0$ in the random walk is maximized when starting at $X  = \lfloor N/M \rfloor$,
  in which case the expected time is $\frac{\lfloor N/M \rfloor ( \lfloor N/M \rfloor + 1 )}{2 r (1-r)} \le \frac{1}{M^2} \cdot \frac{N^2}{D}$.

 This argument can be adapted to the asynchronous protocol in a straightforward way:
 Arguing similarly as above, the rate in which $\delta_M$ decreases by~$1$ is \emph{at least}~$\lambda$,
  and the rate in which $\delta_M$ increases by~$1$ is \emph{at most}~$\lambda$.
 We compare this process with a continuous-time Markov chain on $\{0, \ldots, \lfloor N/M \rfloor\}$, absorbing at state~$0$:
 \begin{itemize}
  \item
   the rate in which $X$ is decreased by~$1$ is $\lambda$;
  \item
   the rate in which $X$ is increased by~$1$ is $\lambda$ if $1 \le X \le \lfloor N/M \rfloor - 1$; and $0$ if $X = \lfloor N/M \rfloor$.
 \end{itemize}
 Analogous arguments yield
  \[
   \tau_M \le \frac{\lfloor N/M \rfloor ( \lfloor N/M \rfloor + 1 )}{2 \lambda } \le \frac{1}{M^2} \cdot \frac{N^2}{D}\,.
  \]
\qed
\end{proof}

\section{Proof of Proposition~\ref{prop:Balding}}

The proof follows the one of Theorem~2.1 of~\cite{Balding88}, but is more detailed and applies also to the synchronous version of Herman's protocol.

We first prove the following lemma.
\begin{lemma} \label{lem:auxBalding}
\newcommand{\gP}[3]{P_{(#1)#2,#3}}
 Let $M \ge 3$.
 Denote, for $1 \le u < v \le M$, by $\tim_{-uv}$ the time until one token is left,
  in a system with $M-2$ tokens obtained by removing the $u$-th and the $v$-th token.
 Then, for all $t \ge 0$:
 \[
  \P{\tim \le t} = \frac{1}{m} \sum_{1 \le u < v \le M} (-1)^{v - u - 1}
   \left( \P{\gT{uv} \le t \;\cap\; \A{uv}{\down}} - \P{\gT{uv} \le t \;\cap\; \A{uv}{\up}} \right) \P{\tim_{-uv} \le t}\,.
 \]
\end{lemma}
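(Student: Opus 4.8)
The plan is to prove the lemma by a reflection (``method of images'') argument that peels off exactly one annihilation from the $M$-token system and, in doing so, turns the sequential structure of successive collisions into the product form on the right-hand side. As a preliminary I would lift the ring to the $N$-periodic picture on $\mathbb{Z}$: each token $i$ becomes a periodic family of walkers, the period-$k$ copy sitting at $z(i)+kN$ and all copies moving in lockstep. In this picture a ``down'' collision of a pair $(u,v)$ corresponds to the meeting of the period-$k$ copies of $u$ and $v$, an ``up'' collision to the meeting of $v$'s period-$k$ copy with the copy of $u$ one period ahead, and the isolated-pair quantities $\P{\gT{uv} \le t \cap \A{uv}{\down}}$ and $\P{\gT{uv} \le t \cap \A{uv}{\up}}$ are exactly the probabilities of being absorbed at $0$, respectively at $N$, by time $t$ for the gap walk started at $z_{uv}$. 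Their values, and in particular the decomposition of their difference into images across the two absorbing barriers, are classical and can be quoted from \cite{FellerVol168,Cox01}.

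The core step is to peel off one collision. Conditioning the $M$-token process on its first collision---which occurs between cyclically adjacent tokens (possibly several pairs at once in the synchronous version), since nothing has yet been annihilated---expresses $\P{\tim \le t}$ as a sum over the adjacent pair that collides first of a convolution-type term: all $M$ tokens move as free walks up to that collision, and the remaining $M-2$ tokens continue afterwards from their (correlated) positions. One then applies the reflection principle to the trajectories of the two tokens that are about to be annihilated, reflecting them across the walls that confine the pair, namely the ring's periodicity and the barriers formed by the intervening tokens. This simultaneously (i) resolves the no-prior-collision constraint and the time shift, collapsing the convolution into the product $( \P{\gT{uv} \le t \cap \A{uv}{\down}} - \P{\gT{uv} \le t \cap \A{uv}{\up}} ) \cdot \P{\tim_{-uv} \le t}$, and (ii) relabels which pair plays the role of the annihilated one. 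Each reflection across an intervening token increases $v-u$ by one and carries a factor $-1$, which produces the sign $(-1)^{v-u-1}$; a reflection across the periodicity wall exchanges a ``down'' with an ``up'' collision for that pair, which produces the minus sign between the two $\P{\cdots}$ terms; and since every labelled outcome is generated once for each of the $m$ pairs that could have been picked out, a symmetrization contributes the factor $\frac{1}{m}$. Combining this lemma with the purely combinatorial ``expand a Pfaffian-type sum along all pairs'' identity and inducting on $M$ (using the trivial single-token case as the base) then yields Proposition~\ref{prop:Balding}.

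The main obstacle---and the reason a careful self-contained proof is needed rather than a quick appeal to Balding, who works only with Brownian and continuous-time particles---is making the reflection argument rigorous in \emph{discrete, synchronous} time on the ring. Several tokens may move within one step and several collisions may occur simultaneously; the group generated by the periodicity and intervening-token walls is infinite, so absolute summability of the image series must be checked; and the signs produced by the reflections must be matched exactly against the sign convention $s(w)$. The boundary cases---a reflected token coinciding with a third token, or the pair picked out for removal overlapping the physical first-collision pair---together with checking that the asynchronous (continuous-time) version runs through verbatim, are the remaining technical points; once the reflection picture is set up, the rest reduces to the one-dimensional two-barrier formulas and routine sign-chasing.
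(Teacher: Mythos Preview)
Your plan has the right ingredients (reflection, parity signs, the $1/m$ overcount), but the central step---``collapsing the convolution into the product''---is asserted rather than argued, and your organisation runs opposite to the paper's. The paper does not start from the true $M$-token process and expand via images. It starts from the right-hand side: for each pair $(u,v)$ it introduces a \emph{two-colour} system in which $u,v$ are red, the remaining tokens green, and red--green meetings are ignored. In that system the red and the green processes are genuinely independent, so the product $\P{\gT{uv}\le t\cap \A{uv}{d}}\cdot\P{\tim_{-uv}\le t}$ is \emph{by construction} the probability of a single event $D_{(uv)d}$; there is no convolution to collapse. The actual work is a sign-reversing involution: partition $D_{(uv)d}$ according to the first red--green meeting (between $p\in\{u,v\}$ and some green token~$q$), and at that instant swap the future trajectories of $p$ and~$q$. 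This is a measure-preserving bijection onto a companion event in which $q$ is now red, carrying the opposite value of $(-1)^{v-u-1}$ (or exchanging $\down$ with~$\up$). After cancellation only the ``no red--green meeting'' parts $D_{(uv)}^0$ survive; but $D_{(uv)}^0$ is precisely the event that, under the \emph{normal} annihilation rules, $u$ and $v$ annihilate each other and one token remains by time~$t$, and summing over $(u,v)$ counts each such sample run $m$ times.

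The point where your sketch breaks is ``reflecting across the barriers formed by the intervening tokens'': those barriers are themselves moving random walks, so the classical fixed-wall method of images---with its infinite reflection group and the summability worries you flag---does not apply here. The correct reflection is the single trajectory-swap above, which is an involution and needs no infinite series. Likewise, conditioning on the first collision in the true process forces you to carry the correlated positions of the $M-2$ survivors at a random stopping time and then undo that correlation; the two-colour device sidesteps this entirely, which is why the paper's argument is short and works verbatim in both the synchronous and asynchronous settings. Your explanations of the $1/m$ factor and of the $\down/\up$ sign are fine, but to turn the sketch into a proof you would in effect have to rediscover the two-colour interpretation.
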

\begin{proof}%
\newcommand{\De}[3]{D_{(#1)#2}^{#3}}
 Consider, for $1 \le u < v \le M$, the expression
 \begin{align}
    & \left( \P{\gT{uv} \le t \;\cap\; \A{uv}{\down}} - \P{\gT{uv} \le t \;\cap\; \A{uv}{\up}} \right) \P{\tim_{-uv} \le t} \notag \\
  = & \P{\gT{uv} \le t \;\cap\; \A{uv}{\down}} \P{\tim_{-uv} \le t} - \P{\gT{uv} \le t \;\cap\; \A{uv}{\up}} \P{\tim_{-uv} \le t}\,. \label{eq:auxBaldingDiff}
 \end{align}
 We wish to define events $\De{uv}{\down}{}$ and $\De{uv}{\up}{}$ such that
 \begin{align*}
  \P{\De{uv}{\down}{}} & = \P{\gT{uv} \le t \;\cap\; \A{uv}{\down}} \P{\tim_{-uv} \le t}
   \intertext{and}
  \P{\De{uv}{\up}{}}   & = \P{\gT{uv} \le t \;\cap\; \A{uv}{\up}} \P{\tim_{-uv} \le t}\,.
 \end{align*}
 This can be done as follows.
 Call the tokens $u$ and $v$ {\em red}, and the other tokens {\em green}.
 Think of a system in which red and green tokens do not interact, i.e., red-green meetings do not cause annihilations.
 Meeting tokens of the same color are however annihilated.
 Then $\De{uv}{\down}{}$ can be defined as the event that, by time $t$, the red tokens $u$ and $v$ have met down,
  and all other tokens have annihilated, except for one remaining (green) token.
 The event $\De{uv}{\up}{}$ is defined similarly.
 With this definition, the expression in~\eqref{eq:auxBaldingDiff} is equal to
  $\P{\De{uv}{\down}{}} - \P{\De{uv}{\up}{}}$.

 Now we partition the event $\De{uv}{\down}{}$ according to the first red-green meeting as follows:
 \[
  \De{uv}{\down}{} = \De{uv}{\down}{0} \cup \mathop{\bigcup_{p \in \{u,v\}}}_{q \in \{1, \ldots, M\} \setminus \{u,v\}} \De{uv}{\down}{pq}\,,
 \]
 where the unions are disjoint, $\De{uv}{\down}{pq}$ is the event that the first red-green meeting is between $p$ and~$q$,
  and $\De{uv}{\down}{0}$ is the event that no red-green meeting occurs.
 If it happens that $u$ and $v$ have their first meeting with a green token (say, with $g_u$ and $g_v$, respectively) {\em at the same time},
  then we count this sample run in~$\De{uv}{\down}{ug_u}$.
 The event $\De{uv}{\up}{}$ is partitioned similarly;
  in particular, if $u$ and $v$ have their first meeting with a green token (say, with $g_u$ and $g_v$, respectively) at the same time,
  then we count this sample run in~$\De{uv}{\up}{vg_v}$.

 We show that each nonempty event $\De{uv}{\down}{pq}$ has a ``companion'' event with the same probability.
 \begin{itemize}
  \item
   Consider $\De{uv}{\down}{ug}$ with $g < v$.
   Its companion event is $\De{gv}{\down}{gu}$.
   In order to prove that those events have the same probability,
    we establish a bijection between $\De{uv}{\down}{ug}$ and~$\De{gv}{\down}{gu}$.
   The bijection~$b$ is defined as follows:
   Let $\omega$ be a sample run (up to time~$t$) of $\De{uv}{\down}{ug}$.
   Let $t_0 \le t$ be the time of the first red-green meeting in~$\omega$, i.e., $u$ and~$g$ meet at~$t_0$.
   Then $b(\omega)$ equals $\omega$, except that after time~$t_0$,
        the movement of token~$u$ in~$b(\omega)$ is the movement of token~$g$ in~$\omega$,
    and the movement of token~$g$ in~$b(\omega)$ is the movement of token~$u$ in~$\omega$.
   By the reflection principle, $\omega$ and $b(\omega)$ have the same probability.
   Furthermore, it is straightforward to verify that any sample run~$\omega$ is in~$\De{uv}{\down}{ug}$
    if and only if $b(\omega) \in \De{gv}{\down}{gu}$.

   Note that $\De{uv}{\down}{ug}$ and $\De{gv}{\down}{gu}$ are nonempty only if $u-g$ is odd,
    because all tokens between $u$ and $g$ must annihilate,
    so their number must be even.
  \item
   Similarly, for $g > v$, the companion event of $\De{uv}{\down}{ug}$ is $\De{vg}{\up}{gu}$.
   The events are nonempty only if $u-g$ is even.
  \item
   For $g < u$, the companion event of $\De{uv}{\down}{vg}$ is $\De{gu}{\up}{gv}$.
   The events are nonempty only if $v-g$ is even.
  \item
   For $g > u$, the companion event of $\De{uv}{\down}{vg}$ is $\De{ug}{\down}{gv}$.
   The events are nonempty only if $v-g$ is odd.
 \end{itemize}
 Similarly, there is a companion event to each nonempty event $\De{uv}{\up}{pq}$.
 Letting $\textup{RHS}$ denote the right hand side of the equation in the statement of the lemma, we have:
 \begin{align*}
  \textup{RHS}
  & = \frac{1}{m} \sum_{1 \le u < v \le M} (-1)^{v - u - 1}
       \left( \P{\gT{uv} \le t \;\cap\; \A{uv}{\down}} - \P{\gT{uv} \le t \;\cap\; \A{uv}{\up}} \right) \P{\tim_{-uv} \le t} \\
  & = \frac{1}{m} \sum_{1 \le u < v \le M} (-1)^{v - u - 1}
       \left( \P{\De{uv}{\down}{}} - \P{\De{uv}{\up}{}} \right) \\
  & = \frac{1}{m} \sum_{1 \le u < v \le M} (-1)^{v - u - 1}
       \left( \P{\De{uv}{\down}{0}} - \P{\De{uv}{\up}{0}} \right) \,,
 \end{align*}
   where the last equality is because the probabilities of the events $\De{uv}{\down}{pq}$ and~$\De{uv}{\up}{pq}$
    cancel with the probabilities of their respective companion events.
   The event $\De{uv}{\down}{0}$ is nonempty if and only if $v-u$ is odd;
   similarly, $\De{uv}{\up}{0}$ is nonempty if and only if $v-u$ is even.
   Hence, we have
 \begin{align*}
  \textup{RHS} & = \frac{1}{m} \sum_{1 \le u < v \le M} \P{\De{uv}{}{0}} \,,
 \end{align*}
 where $\De{uv}{}{0} := \De{uv}{\down}{0} \cup \De{uv}{\up}{0}$.
 Note that $\De{uv}{}{0}$ contains exactly those sample runs in which, under the normal annihilation rules,
  by time~$t$, the tokens $u$ and $v$ have met and annihilated, and all other tokens except for one have also annihilated.

 Recall that $W_M$ is the set of pairings.
 For any pairing $w = \{(u_1,v_1), \ldots, (u_m,v_m)\} \in W_M$ we denote by $E_w$ the event that, by time~$t$,
  for all $i \in \{1, \ldots, m\}$, the tokens $u_i$ and $v_i$ have met and annihilated (under the normal annihilation rules).
 Note that
  \[
   \De{uv}{}{0} = \bigcup_{w: (u,v) \in w \in W_M} E_w\,,
  \]
 where the union is disjoint.
 Hence, we have:
  \begin{align*}
   \textup{RHS}
   & = \frac{1}{m} \sum_{1 \le u < v \le M} \P{\De{uv}{}{0}} \\
   & = \frac{1}{m} \sum_{1 \le u < v \le M} \quad \sum_{w: (u,v) \in w \in W_M} \P{E_w} \\
   & = \sum_{w \in W_M} \P{E_w} \\
   & = \P{\bigcup_{w \in W_M} E_w} \\
   & = \P{\tim \le t} \,,
  \end{align*}
 which concludes the proof of the lemma.
\qed
\end{proof}

Now we can prove Proposition~\ref{prop:Balding} which is restated here.

\begin{qproposition}{\ref{prop:Balding}}
 \stmtpropBalding
\end{qproposition}
\begin{proof}
 The proof is by induction on $M = 3, 5, 7, \ldots$.
 The case $M=3$ is immediate from Lemma~\ref{lem:auxBalding}.
 (Notice in particular that $\P{\tim_{-uv} \le t} = 1$ if $M=3$.)

 For the induction step, let $M \ge 5$.
 By Lemma~\ref{lem:auxBalding} we have
 \begin{align*}
  \P{\tim \le t}
  & = \frac{1}{m} \sum_{1 \le u < v \le M} (-1)^{v - u - 1}
       \left( \P{\gT{uv} \le t \;\cap\; \A{uv}{\down}} - \P{\gT{uv} \le t \;\cap\; \A{uv}{\up}} \right) \P{\tim_{-uv} \le t}\,.
  \intertext{%
   For $1 \le u < v \le M$, we define the set $W_{-uv}$ similarly to the set $W_M$,
    but $W_{-uv}$ is the set of pairings on $\{1, \ldots, M\} \setminus \{u,v\}$ rather than on $\{1, \ldots, M\}$.
   Similarly, for $w' \in W_{-uv}$, the number $s'(w') \in \{-1,+1\}$ is defined as $s(w)$,
    but depending on the parity of the permutation of $\{1, \ldots, M\} \setminus \{u,v\}$.
   Applying the induction hypothesis we have:
  }
  \P{\tim \le t}
  & = \frac{1}{m} \sum_{1 \le u < v \le M} (-1)^{v - u - 1}
       \left( \P{\gT{uv} \le t \;\cap\; \A{uv}{\down}} - \P{\gT{uv} \le t \;\cap\; \A{uv}{\up}} \right) \\
  & \qquad \mbox{} \cdot \sum_{w' \in W_{-uv}} s'(w') \prod_{(u',v') \in w'}
      \left( \P{\gT{u' v'} \le t \;\cap\; \A{u' v'}{\down}} - \P{\gT{u' v'} \le t \;\cap\; \A{u' v'}{\up}} \right) \,.
  \intertext{%
   We claim that for any $w' \in W_{-u v}$, we have $(-1)^{v-u-1} s'(w') = s(w \cup \{(u,v)\})$.
   To see this, assume $w' = \{(u_1,v_1), \ldots, (u_{m-1},v_{m-1})\}$ and
    $\{u_1,v_1, \ldots, u_{m-1},v_{m-1}, w_0\} = \{1, \ldots, M\} \setminus \{u,v\}$.
   We need to argue that the parities of the permutations
        $p_1 = (u_1 v_1 \cdots u_{m-1} v_{m-1} w_0)$ and
    $p_2 = (u v u_1 v_1 \cdots u_{m-1} v_{m-1} w_0)$ are equal if and only if $v-u$ is odd.
   It suffices to argue that adding $u,v$ at the front of~$p_1$ adds an even number of inversions in the permutation,
    if and only if $v-u$ is odd.
   Since $u<v$, the pair $(u,v)$ is not an inversion.
   For $x < u$, both $(u,x)$ and $(v,x)$ are inversions.
   For $x > v$, neither $(u,x)$ nor $(v,x)$ are inversions.
   For $x \in \{u+1, \ldots, v-1\}$, the pair $(u,x)$ is not an inversion, but $(v,x)$ is.
   There is an even number of such~$x$, if and only if $v-u$ is odd.
   This proves the claim.
   It follows:
  }
  \P{\tim \le t}
  & = \frac{1}{m} \sum_{1 \le u < v \le M} \quad \sum_{w: (u,v) \in w \in W_M} s(w) \\
  & \qquad \mbox{} \cdot
   \prod_{(u,v) \in w} \left( \P{\gT{u v} \le t \;\cap\; \A{u v}{\down}} - \P{\gT{u v} \le t \;\cap\; \A{u v}{\up}} \right) \\
  & = \sum_{w \in W_M} s(w)
   \prod_{(u,v) \in w} \left( \P{\gT{uv} \le t \;\cap\; \A{uv}{\down}} - \P{\gT{uv} \le t \;\cap\; \A{uv}{\up}} \right)\,,
 \end{align*}
 which completes the induction proof.
\qed
\end{proof}

\section{Proof of Theorem~\ref{thm:Balding}}

Theorem~\ref{thm:Balding} is restated here:

\begin{qtheorem}{\ref{thm:Balding}}
 \stmtpropBalding
\end{qtheorem}

\begin{proof}
 By Proposition~\ref{prop:Balding} we have:
 \begin{align}
  \P{\tim > t}
   & = 1 - \P{\tim \le t} \notag \\
   & = 1 - \sum_{w \in W_M} s(w) \prod_{(u,v) \in w}
        \left( \P{\gT{uv} \le t \;\cap\; \A{uv}{\down}} - \P{\gT{uv} \le t \;\cap\; \A{uv}{\up}} \right) \notag \\
   & = 1 - \sum_{\dir{w} \in \dir{W_M}} \dir{s}(\dir{w}) \prod_{(u,v,d) \in \dir{w}} \P{\gT{uv} \le t \;\cap\; \A{uv}{d}} \notag \\
   & = 1 - \sum_{\dir{w} \in \dir{W_M}} \dir{s}(\dir{w}) \cdot \P{\dgT{w} \le t \;\cap\; \dA{w}} \,. \label{eq:thmBalding1}
 \end{align}
 The Markov chain associated with Herman's protocol has a unique bottom SCC.
 Hence, $\P{\tim = \infty} = 0$.
 Similarly, $\P{\dgT{w} = \infty} = 0$ for all $\dir{w} \in \dir{W_M}$.
 By~\eqref{eq:thmBalding1} it follows
  \[
   1 = \sum_{\dir{w} \in \dir{W_M}} \dir{s}(\dir{w}) \cdot \P{\dA{w}} \,,
  \]
 and hence
 \begin{align}
  \P{\tim > t}
  & = \sum_{\dir{w} \in \dir{W_M}} \dir{s}(\dir{w}) \cdot \P{\dgT{w} > t \;\cap\; \dA{w}} \notag \\
  & = \sum_{\dir{w} \in \dir{W_M}} \dir{s}(\dir{w}) \cdot \P{\dgT{w} > t \mid \dA{w}} \cdot \P{\dA{w}} \,. \label{eq:thmBaldingGreater}
 \end{align}
 For any random variable $X$ on~$\{0, 1, \ldots \}$, it is known that $\E X = \sum_{t=0}^\infty \P{X > t}$.
 Similarly, if $X$ is on~$[0, \infty)$, then $\E X = \int_{t=0}^\infty \P{X > t}\, dt$, see \cite{FellerVol168}.
 Hence, summing or integrating~\eqref{eq:thmBaldingGreater} over~$t$ yields the result.
\qed
\end{proof}

\section{Proof of Proposition~\ref{prop:expectation-finite-expression}}

Proposition~\ref{prop:expectation-finite-expression} is restated here.

\begin{qproposition}{\ref{prop:expectation-finite-expression}}
 \stmtpropexpectationfiniteexpression
\end{qproposition}
\begin{proof}
Given $\dir{w} \in \dir{W_M}$ and any $\dir{x} \subseteq \dir{w}$, we define $\dA{x} := \bigcap_{(u,v,d) \in \dir{x}} \A{u v}{d}$.
Recall that
 \begin{equation}
  \P{A_{\dir{w} \setminus \dir{x}}} = \prod_{(u,v,d) \in \dir{w} \setminus \dir{x}} \P{\A{u v}{d}}
    = \prod_{(u,v,d) \in \dir{w} \setminus \dir{x}} ( 1 - z_{u v d}/N ) = G(\yG(\dir{x},\dir{w}))\;.
   \label{eq:directed-independence}
 \end{equation}
The maximum-minimums identity states, for any set~$S$ of numbers, that $\max S = \sum_{\emptyset \ne S' \subseteq S} (-1)^{|S'|+1} \min S'$.
Using Theorem~\ref{thm:Balding} and the maximum-minimums identity, we get
 \begin{align*}
  \E \tim & = \sum_{\dir{w} \in \dir{W_M}} \dir{s}(\dir{w}) \Ex{ \dgT{w} \mid \dA{w} } \P{\dA{w}} \\
          & = \sum_{\dir{w} \in \dir{W_M}} \dir{s}(\dir{w}) \Ex{ \max \{ \gT{u v} \mid (u,v,d) \in \dir{w} \} \mid \dA{w} } \P{\dA{w}} \\
          & = \sum_{\dir{w} \in \dir{W_M}} \dir{s}(\dir{w}) \sum_{\emptyset \ne \dir{x} \subseteq \dir{w}} (-1)^{|\dir{x}|+1} \cdot
                \Ex{ \min \{ \gT{u v} \mid (u,v,d) \in \dir{x} \} \mid \dA{w} } \P{\dA{w}}  \\
          & = \sum_{\dir{w} \in \dir{W_M}} \dir{s}(\dir{w}) \sum_{\emptyset \ne \dir{x} \subseteq \dir{w}}
                   -(-1)^{|\dir{x}|} \cdot
                \Ex{ \min \{ \gT{u v} \mid (u,v,d) \in \dir{x} \} \mid \dA{x} } \P{\dA{x}} \P{A_{\dir{w} \setminus \dir{x}}}\;.
 \end{align*}
Consequently, by~\eqref{eq:directed-independence} it suffices to show
 \begin{equation}
  \Ex{ \min \{ \gT{u v} \mid (u,v,d) \in \dir{x} \} \mid \dA{x} } \P{\dA{x}}
   = \frac{1}{N^{|\dir{x}|}} \sum_{j \in \{1,\ldots,N-1\}^{\dir{x}}}
     \frac{ \prod_{(u,v,d) \in \dir{x}} g\left( j(u,v,d), \frac{z_{u v d}}{N}; \frac{1}{N} \right) }
          { 1 - \prod_{(u,v,d) \in \dir{x}} h\left(j(u,v,d); \frac{1}{N}\right) }   \,.
   \label{eq:proof-finite-expr-remains}
 \end{equation}
For any $(u,v,d)$, it follows from \cite{Cox01} (Section 2.2, Equation~(25)) that
 \[
  \P{\gT{u v} > t \;\cap\; \A{u v}{d}} = \frac{1}{N} \sum_{j=1}^{N-1} g\left(j, \frac{z_{u v d}}{N}; \frac{1}{N}\right) h\left(j; \frac{1}{N} \right)^t\,.
 \]
For any $\dir{x}$, we therefore have
 \begin{equation} \label{eq:proof-finite-expr-end}
 \begin{split}
  &\P{ \min \{ \gT{u v} \mid (u,v,d) \in \dir{x} \} > t \;\cap\; \dA{x} }\\
  & = \frac{1}{N^{|\dir{x}|}}  \sum_{j \in \{1,\ldots,N-1\}^{\dir{x}}}
       \left( \prod_{(u,v,d) \in \dir{x}} g\left( j(u,v,d), \frac{z_{u v d}}{N}; \frac{1}{N} \right) \right)
       \left( \prod_{(u,v,d) \in \dir{x}} h\left( j(u,v,d); \frac{1}{N} \right) \right)^t \,.
 \end{split}
 \end{equation}
Summing~\eqref{eq:proof-finite-expr-end} over $t=0, 1, \ldots$ yields~\eqref{eq:proof-finite-expr-remains}.
\qed
\end{proof}

\section{Proof of Proposition~\ref{prop:expectation-continuous}} \label{app:prop:expectation-continuous}

\newcommand{\stmtlemfkconvfinite}{
 For any fixed $k \in \N_+$ and $r \in \rint$, we have
 \begin{equation*}
  \sum_{\vj \in \{1, \ldots, N-1\}^k} \left| \of_k(\vj, \vy; 1/N) -  a_0(\vj) \right| = O \left(\frac{(\log N)^k}{N^2}\right) \,.
 \end{equation*}
}%
\newcommand{\stmtlemfkconvinfinite}{
 For any fixed $k \in \N_+$ and $\varepsilon > 0$, we have
  \[
   \sum_{j_k = N}^\infty \sum_{(j_1, \ldots, j_{k-1}) \in \N_+^{k-1}} \frac{1}{j_1 \cdot \ldots \cdot j_k \cdot (j_1^2 + \cdots + j_k^2)}
     = O \left( \frac{1}{N^{2-\varepsilon}} \right) \,.
  \]
}%

In this section we prove Proposition~\ref{prop:expectation-continuous}, which is restated here:

\begin{qproposition}{\ref{prop:expectation-continuous}}
 \stmtpropexpectationcontinuous
\end{qproposition}
\begin{proof}
In the following we write $\vj = (j_1, \ldots, j_k)$ and $\vy = (y_1, \ldots, y_k)$ for elements of $\N_+^k$ and $[0,1]^k$, respectively, where $k \in \N_+$.
Define the function
\begin{align*}
 f_k(\vj, \vy; u) & := \frac{\prod_{i=1}^k g(j_i,y_i;u) \cdot u^{k+2}}{1 - \prod_{i=1}^k h(j_i;u)} \,.
\end{align*}
Proposition~\ref{prop:expectation-finite-expression} then reads as
 \begin{multline}
  \E \tim = N^2 \cdot \sum_{\dir{w} \in \dir{W_M}} \dir{s}(\dir{w})
              \sum_{\emptyset \ne \dir{x} \subseteq \dir{w}}   -(-1)^{|\dir{x}|} \cdot \G(\yG(\dir{x}, \dir{w})) \notag \\
  \mbox{} \cdot \sum_{\vj \in \{1, \ldots, N-1\}^{|\dir{x}|}} f_k(\vj, \yF(\dir{x}, \dir{w}); 1/N) \,.
 \end{multline}
Consequently, it suffices to show that, for any fixed $k \in \N_+$ and $r \in \rint$ and $\varepsilon > 0$,
 \begin{equation}
  \sum_{\vj \in \{1, \ldots, N-1\}^k} f_k(\vj, \vy; 1/N)
     = \frac{\wF(\vy)}{ -(-1)^k \cdot r \cdot (1-r)} + O \left( \frac{1}{N^{2-\varepsilon}} \right) \;.
 \label{eq:prop-exp-cont-to-show-1}
 \end{equation}
Let
 \begin{align*}
  a_0(\vj)    & := \frac{2^k}{r \cdot (1-r) \cdot \pi^{k+2} \cdot j_1 \cdots j_k \cdot (j_1^2 + \cdots + j_k^2)} & \text{and} \\
  s(\vj, \vy) & := \sin(y_1 j_1 \pi) \cdots \sin(y_k j_k \pi) & \text{and} \\
 \of_k(\vj; u)      & := \frac{f_k(\vj, \vy; u)}{s(\vj, \vy)} \,.
 \end{align*}
Note that $\of_k(\vj; u)$ is independent of~$\vy$.
Then \eqref{eq:prop-exp-cont-to-show-1} is equivalent to
\begin{equation}
 \sum_{j \in \{1, \ldots, N-1\}^k} s(\vj, \vy) \of_k(\vj; 1/N)
  = \sum_{j \in \N_+^k} s(\vj, \vy) a_0(\vj) \quad \mbox{} + O \left( \frac{1}{N^{2-\varepsilon}} \right)\,.
\label{eq:prop-exp-cont-to-show-2}
\end{equation}
Since $|s(\vj, \vy)| \le 1$ and $a_0(\vj) > 0$, Equation~\eqref{eq:prop-exp-cont-to-show-2} is implied by the following two lemmata.
\begin{lemma} \label{lem:f-k-conv-finite}
 \stmtlemfkconvfinite
\end{lemma}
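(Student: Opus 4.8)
The plan is to write $\of_k(\vj;1/N)=A/B$ and $a_0(\vj)=A_0/B_0$ with, setting $\theta_i:=j_i\pi/N$,
\[
  A:=N^{-(k+2)}\prod_{i=1}^k\cot(\theta_i/2),\qquad B:=1-\prod_{i=1}^k h(j_i;1/N),
\]
and with $A_0:=N^{-(k+2)}\prod_{i=1}^k(2/\theta_i)=2^k/(\pi^k N^2\prod_i j_i)$ and $B_0:=r(1-r)\pi^2N^{-2}\sum_i j_i^2$ the leading-order approximations of $A$ and $B$ (the identity $\of_k=A/B$ follows from the definitions in the proof of Proposition~\ref{prop:expectation-continuous} together with $\sin\theta/(1-\cos\theta)=\cot(\theta/2)$, and one checks $A_0/B_0=a_0$; recall $\of_k$ does not depend on $\vy$). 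Then $|\of_k-a_0|\le |A-A_0|/B+A_0|B_0-B|/(BB_0)$, and I would bound the two summands separately. Throughout, implied constants may depend on $k$ and $r$, and I only use that $r\in\rint\subseteq(0,1)$, so $r(1-r)>0$.

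First I would fix a small constant $\epsilon>0$ (depending on $k,r$) and split $\{1,\ldots,N-1\}^k$ into the central block $\mathcal S:=\{1,\ldots,\lfloor\epsilon N\rfloor\}^k$ and its complement $\mathcal L$. On $\mathcal L$ some coordinate $j_{i_0}$ exceeds $\epsilon N$, so, since $1-\cos$ is increasing on $[0,\pi]$, the factor $1-h(j_{i_0};1/N)=2r(1-r)(1-\cos\theta_{i_0})$ is at least a positive constant; hence $B$ is bounded below there, and combined with the elementary bound $0<\cot(\theta/2)\le 2/\theta$ on $(0,\pi)$ this gives $0<\of_k(\vj;1/N)\le A_0/B=O\!\big(1/(N^2\prod_i j_i)\big)$ and likewise $0<a_0(\vj)=O\!\big(1/(N^2\prod_i j_i)\big)$. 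Since $\sum_{\vj\in\mathcal L}\prod_i j_i^{-1}\le k\big(\sum_{j>\epsilon N}j^{-1}\big)\big(\sum_{j=1}^{N-1}j^{-1}\big)^{k-1}=O((\log N)^{k-1})$, the $\mathcal L$-contribution to the sum is $O((\log N)^{k-1}/N^2)$, comfortably within the target.

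On $\mathcal S$ every $\theta_i\le\epsilon\pi$ is small, so Taylor expansions apply. From $\cot(\theta/2)=2/\theta-\eta(\theta)$ with $0\le\eta(\theta)\le C\theta$ on $(0,\pi)$, a telescoping-product estimate yields $0\le A_0-A=O\!\big(\sum_i j_i^2\,/\,(N^4\prod_i j_i)\big)$. From $1-\cos\theta=\theta^2/2+O(\theta^4)$ together with the expansion $\prod_i h(j_i;1/N)=1-\sum_i(1-h(j_i;1/N))+\cdots$ — choosing $\epsilon$ small enough that $\sum_i(1-h(j_i;1/N))\le\tfrac12$ on $\mathcal S$ — I would obtain, on the one hand, $B\asymp\sum_i(1-h(j_i;1/N))\asymp\sum_i\theta_i^2\asymp N^{-2}\sum_i j_i^2\asymp B_0$, and on the other hand $|B_0-B|=O\!\big((\sum_i j_i^2)^2/N^4\big)$ (the discrepancy coming from the $O(\theta^4)$ corrections to $1-\cos\theta_i$ and from the products $\prod_{i\in S}(1-h(j_i;1/N))$ with $|S|\ge2$). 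Plugging $B,B_0\gtrsim N^{-2}\sum_i j_i^2$ and $A_0\asymp 1/(N^2\prod_i j_i)$ into the two summands, both $|A-A_0|/B$ and $A_0|B_0-B|/(BB_0)$ collapse to $O\!\big(1/(N^2\prod_i j_i)\big)$, so $\sum_{\vj\in\mathcal S}|\of_k-a_0|=O\!\big(N^{-2}\big(\sum_{j=1}^{\lfloor\epsilon N\rfloor}j^{-1}\big)^k\big)=O((\log N)^k/N^2)$. Adding the two contributions proves the bound.

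The step I expect to take the most care with is the uniform two-sided estimate $B\asymp N^{-2}\sum_i j_i^2$: the denominator $1-\prod_i h(j_i;1/N)$ degenerates to $\Theta(1/N^2)$ when all $j_i$ are $O(1)$, so it cannot be bounded crudely and must be matched with $B_0$ up to constants — and it is exactly this matching that makes the $\sum_i j_i^2$ factors cancel, leaving the harmonic-type sum $N^{-2}\prod_{i=1}^k\big(\sum_j j^{-1}\big)$ responsible for the final $(\log N)^k$. Restricting to the central block $\mathcal S$ is what keeps all Taylor remainders (in $\cot$, in $1-\cos$, and in the product expansion) uniformly under control; the non-central regime is genuinely easy since there both $\of_k$ and $a_0$ are $O(1/(N^2\prod_i j_i))$ for trivial reasons.
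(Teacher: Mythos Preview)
Your argument is correct and takes a genuinely different route from the paper's proof. The paper establishes a full Taylor expansion $\of_k(\vj;u)=a_0(\vj)+a_2(\vj)u^2+a_4(\vj)u^4+\cdots$ (Lemma~\ref{lem:f-k-taylorExpansion}), bounds the $a_2$- and $a_4$-sums separately, and then controls the tail by showing that the ratio $R_k(\vj;1/N):=\big(\of_k-a_0-a_2/N^2\big)/\big(a_4/N^4\big)$ is a bounded function of $(j_1/N,\ldots,j_k/N)$ on $(0,1]^k$; for this boundedness the paper needs $a_4(\vj)<0$, which is precisely where the hypothesis $r\in\rint$ enters, and the extension to the boundary of $[0,1]^k$ is handled by an induction on~$k$ via Lemma~\ref{lem:f-k-induction-step}. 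Your approach instead writes $\of_k=A/B$ and $a_0=A_0/B_0$ explicitly, splits the index set into a small-angle block and its complement, and estimates $|A-A_0|$ and $|B-B_0|$ by first-order Taylor bounds on $\cot(\theta/2)$ and $1-\cos\theta$; the crucial two-sided bound $B\asymp N^{-2}\sum_i j_i^2$ on the central block is exactly what makes the factors cancel.

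What this buys you: your argument is more elementary (no Taylor series in~$u$, no $a_4$-sign analysis, no induction on~$k$), and in fact it only uses $r(1-r)>0$, so it proves the lemma for every $r\in(0,1)$ rather than just $r\in\rint$. Since the restriction on~$r$ in Proposition~\ref{prop:expectation-continuous} enters solely through this lemma, your proof would remove that restriction there as well, and hence also in Theorem~\ref{thm:restabilization}. The paper's approach, on the other hand, yields finer structural information (it isolates the exact $a_2$ contribution and shows the remainder is controlled by $a_4$), which could be useful if one wanted sharper asymptotics; for the lemma as stated your direct estimate is cleaner.
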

\begin{lemma} \label{lem:f-k-conv-infinite}
 \stmtlemfkconvinfinite
\end{lemma}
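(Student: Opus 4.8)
The plan is to obtain the bound by a single pointwise estimate on the summand followed by summing a product of two convergent series; since every term is positive, Tonelli allows summing in any order. First I would dispose of the case $k=1$ directly: there the inner sum over $\N_+^0$ is vacuous and the expression equals $\sum_{j_1\ge N} j_1^{-3} = O(1/N^2)$, which already implies the claim. So from now on assume $k\ge 2$, and, since enlarging $\varepsilon$ only weakens the target bound $O(1/N^{2-\varepsilon})$, assume in addition $\varepsilon < 2(k-1)/k$.

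The core is an elementary inequality. Write $A(\vj) := j_1^2+\cdots+j_k^2$. On one hand $A(\vj)\ge j_k^2$; on the other hand, by the arithmetic--geometric mean inequality, $A(\vj)\ge k\,(j_1\cdots j_k)^{2/k}\ge (j_1\cdots j_k)^{2/k}$. Splitting $A(\vj)=A(\vj)^{\theta}\,A(\vj)^{1-\theta}$ and applying these two bounds to the two factors gives, for every $\theta\in[0,1]$,
\[
  A(\vj)\ \ge\ j_k^{2\theta}\,(j_1\cdots j_k)^{2(1-\theta)/k}\,.
\]
I would then set $\theta := 1-\frac{\varepsilon k}{2(k-1)}\in(0,1)$ and $p := 1+\frac{2(1-\theta)}{k}=1+\frac{\varepsilon}{k-1}>1$; a one-line computation gives $2\theta+p=3-\varepsilon$. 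Hence
\[
  \frac{1}{j_1\cdots j_k\,A(\vj)}\ \le\ \frac{1}{j_k^{2\theta}\,(j_1\cdots j_k)^{p}}\ =\ \frac{1}{j_k^{\,3-\varepsilon}}\cdot\prod_{i=1}^{k-1}\frac{1}{j_i^{\,p}}\,.
\]

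Finally I would sum this bound over the index set: summing over $(j_1,\dots,j_{k-1})\in\N_+^{k-1}$ contributes the $N$-independent constant $\zeta(p)^{k-1}$, which is finite because $p>1$, while summing the remaining factor over $j_k\ge N$ gives $\sum_{j_k\ge N} j_k^{-(3-\varepsilon)} = O(1/N^{2-\varepsilon})$ by comparison with $\int_{N-1}^{\infty} x^{-(3-\varepsilon)}\,dx$ (using $3-\varepsilon>2$). Multiplying the two estimates yields the claimed $O(1/N^{2-\varepsilon})$. I do not expect a genuine obstacle here; the only point requiring a word of care is that $\theta$ and $p$ must lie in their admissible ranges simultaneously, which is exactly what the harmless restriction $\varepsilon<2(k-1)/k$ secures, and this restriction costs nothing since the conclusion for a small value of $\varepsilon$ implies it for every larger value.
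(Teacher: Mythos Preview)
Your argument is correct and reaches the same conclusion as the paper, but by a genuinely different route. The paper also isolates a factor $j_k^{-(3-\varepsilon)}$, but it does so via the inequality $j_k^{2-\varepsilon}/(j_1^2+\cdots+j_k^2)\le (j_1^2+\cdots+j_{k-1}^2)^{-\varepsilon/2}$ (Lemma~\ref{lem:ben-sum-1}); this leaves a \emph{non-separable} $(k-1)$-fold sum $\sum_{\vj}\bigl(j_1\cdots j_{k-1}(j_1^2+\cdots+j_{k-1}^2)^{\varepsilon/2}\bigr)^{-1}$, whose finiteness the paper establishes by a separate induction on the dimension (Lemma~\ref{lem:ben-sum-2}). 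You instead interpolate between $A(\vj)\ge j_k^{2}$ and the AM--GM bound $A(\vj)\ge (j_1\cdots j_k)^{2/k}$, obtaining a fully separable majorant in one stroke, so the inner sum collapses to $\zeta(p)^{k-1}$ with $p>1$. Your approach is shorter and self-contained; the paper's is more modular, and its Lemma~\ref{lem:ben-sum-2} is stated for an arbitrary exponent $\delta>0$ and reused elsewhere (to justify convergence of the series defining~$\wF_k$).

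One cosmetic slip: the parenthetical ``using $3-\varepsilon>2$'' overshoots. The integral comparison for $\sum_{j_k\ge N} j_k^{-(3-\varepsilon)}=O(N^{-(2-\varepsilon)})$ only needs $3-\varepsilon>1$, and for $k\ge 3$ your restriction $\varepsilon<2(k-1)/k$ does not force $\varepsilon<1$. This is harmless, since you already observed that $\varepsilon$ may be shrunk without loss of generality (so one could simply add $\varepsilon<1$ to the standing assumptions), and in any case $\varepsilon<2(k-1)/k<2$ already suffices for the estimate.
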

Lemmata \ref{lem:f-k-conv-finite} and~\ref{lem:f-k-conv-infinite} are proved in the following Subsections
 \ref{sub:lem:f-k-conv-finite} and~\ref{sub:lem:f-k-conv-finite}, respectively.
\qed
\end{proof}

\subsection{Proof of Lemma~\ref{lem:f-k-conv-finite}} \label{sub:lem:f-k-conv-finite}

\newcommand{\Ej}[2]{E_{#1}((j^2)^{#2})}%
In the following, for $\ell \in \N$, let $\Ej{k}{\ell}$ denote
 a sum of monomials of the form $c \cdot j_1^{2 \ell_1} \cdots j_k^{2 \ell_k}$ such that
 $c \in \R$ and $\ell_i \in \{0, \ldots, \ell\}$ for $i \in \{1, \ldots, k\}$ and $\ell_1 + \cdots + \ell_k = \ell$.
For instance, we write $3 j_1^8 - \sqrt{2} j_1^2 j_2^6 = \Ej{2}{4}$.
\begin{lemma} \label{lem:f-k-taylorExpansion}
The function~$\of_k$ has a Taylor expansion
 \[
  \of_k(\vj; u) = a_0(\vj) + a_2(\vj) u^2 + a_4(\vj) u^4 + \cdots
 \]
 with
 \begin{align*}
  a_i(\vj) &= \frac{\Ej{k}{i}}{j_1 \cdots j_k \cdot (j_1^2 + \cdots + j_k^2)^{(i+2)/2}}
     && \text{for $i=0, 2, \ldots$}
\intertext{More precisely, we have}
  a_0(\vj) & = \frac{2^k}{r \cdot (1-r) \cdot \pi^{k+2} \cdot j_1 \cdots j_k \cdot (j_1^2 + \cdots + j_k^2)}\,,
 \end{align*}
 and for $r \in \rint$,
  all coefficients of the multivariate polynomial in the nominator of $a_4(\vj)$ are negative.
\end{lemma}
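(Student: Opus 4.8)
The plan is to rewrite $\of_k(\vj;u)$ as an explicit ratio of two even power series in $u$, each nonzero at $u=0$, and to read off the Taylor coefficients from that form. First I would use $\sin\theta=2\sin\tfrac\theta2\cos\tfrac\theta2$ and $1-\cos\theta=2\sin^2\tfrac\theta2$ to get $\frac{\sin(j\pi u)}{1-\cos(j\pi u)}=\cot\tfrac{j\pi u}{2}$; then the factors $\sin(j_i\pi y_i)$ of $g$ cancel against $s(\vj,\vy)$, leaving
\[
 \of_k(\vj;u)\;=\;\frac{u^{k+2}\prod_{i=1}^k\cot\bigl(\tfrac{j_i\pi u}{2}\bigr)}{1-\prod_{i=1}^k h(j_i;u)}\;=\;\frac{2^k}{D\,\pi^{k+2}\,j_1\cdots j_k}\cdot\frac{\Pi(\vj,u)}{\Delta(\vj,u)}\,,
\]
where $D:=r(1-r)$, $\Pi(\vj,u):=\prod_{i=1}^k\bigl(\tfrac{j_i\pi u}{2}\cot\tfrac{j_i\pi u}{2}\bigr)$ is a product of $k$ copies of the even entire function $x\cot x=1-\tfrac{x^2}3-\tfrac{x^4}{45}-\cdots$ (so $\Pi$ is an even power series in $u$ with $\Pi(\vj,0)=1$), and $\Delta(\vj,u):=(1-\prod_ih(j_i;u))/(D\pi^2u^2)$; using $1-\cos\theta=\tfrac{\theta^2}2-\tfrac{\theta^4}{24}+\cdots$ one checks that $1-\prod_ih(j_i;u)$ is even and vanishes at $u=0$ to order exactly $2$ with leading term $D\pi^2\sigma u^2$, where $\sigma:=j_1^2+\cdots+j_k^2\ge k$, so $\Delta$ is an even power series with $\Delta(\vj,0)=\sigma\neq0$. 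In particular $\of_k$ is analytic and even at $u=0$, so the claimed Taylor expansion exists and involves only even powers of $u$.

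\textbf{Step 2 (shape of $a_i$).} In each factor of $\Pi$ (resp.\ $\Delta$) the power $j_i^{2n}$ is tied to $u^{2n}$, so the $u^{2\ell}$-coefficient of $\Pi$ is of shape $\Ej{k}{\ell}$ and that of $\Delta$ of shape $\Ej{k}{\ell+1}$ (with $\Ej{k}{0}=1$ and constant term of $\Delta$ equal to $\sigma=\Ej{k}{1}$). Writing $\Pi/\Delta=\sum_{\ell\ge0}b_\ell(\vj)u^{2\ell}$ and comparing coefficients in $\Delta\cdot(\Pi/\Delta)=\Pi$ gives $\sigma\,b_\ell=(\text{$u^{2\ell}$-coeff.\ of }\Pi)-\sum_{n=1}^\ell(\text{$u^{2n}$-coeff.\ of }\Delta)\,b_{\ell-n}$; an induction on $\ell$, using $\Ej{k}{a}\Ej{k}{b}=\Ej{k}{a+b}$ and $\sigma=\Ej{k}{1}$ to put everything over $\sigma^{\ell+1}$, yields $b_\ell(\vj)=\Ej{k}{2\ell}/\sigma^{\ell+1}$. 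Multiplying by the prefactor in Step~1 gives $a_{2\ell}(\vj)=\Ej{k}{2\ell}/(j_1\cdots j_k\,\sigma^{\ell+1})$, the asserted form since $(2\ell+2)/2=\ell+1$; for $\ell=0$, $b_0=1/\sigma$ gives the exact $a_0(\vj)=2^k/(r(1-r)\pi^{k+2}j_1\cdots j_k\,\sigma)$.

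\textbf{Step 3 (signs in $a_4$).} From the recurrence, $N_4(\vj):=\sigma^3b_2=\pi_2\sigma^2-\pi_1\delta_1\sigma+\delta_1^2-\delta_2\sigma$, where $\pi_1,\pi_2$ are the first two nonconstant coefficients of $\Pi$ (from $x\cot x=1-\tfrac{x^2}3-\tfrac{x^4}{45}-\cdots$, hence $r$-free) and $\delta_1,\delta_2$ those of $\Delta$ (from the $u^2,u^4,u^6$ terms of $1-\prod_ih(j_i;u)$, which bring in the elementary symmetric functions $e_2,e_3$ of $j_1^2,\ldots,j_k^2$ via Newton's identities). Expressing everything in the power sums $\sigma$, $\rho:=j_1^4+\cdots+j_k^4$, $\tau:=j_1^6+\cdots+j_k^6$ and collecting, $N_4$ becomes a linear combination of $\sigma^4,\rho\sigma^2,\rho^2,\sigma\tau$ with $D$-quadratic coefficients; then I would expand these into the monomial symmetric functions $m_\lambda$ of $j_1^2,\ldots,j_k^2$ over $\lambda\vdash4$. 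The monomial $\sum_ij_i^8$ cancels, and the coefficients of $m_{(4)},m_{(3,1)},m_{(2,2)},m_{(2,1,1)},m_{(1,1,1,1)}$ come out as positive multiples of $-6$, $-18$, $90D^2-30D+1$, $720D^2-300D+13$ and $24D^2-12D+1$ respectively. The first two are negative for all $r$; the three $D$-quadratics are each negative on an open $D$-interval, and since $D=\tfrac14-(r-\tfrac12)^2\le\tfrac14$ while $r\in\rint\iff D>\tfrac14-\tfrac{\sqrt3}{12}=\tfrac{3-\sqrt3}{12}$, the binding constraint is $24D^2-12D+1<0$ (roots $\tfrac{3\pm\sqrt3}{12}$), met precisely for $D\in(\tfrac{3-\sqrt3}{12},\tfrac14]$, i.e.\ exactly for $r\in\rint$; the other two intervals comfortably contain $(\tfrac{3-\sqrt3}{12},\tfrac14]$. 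Since $a_4(\vj)=\tfrac{2^k}{D\pi^{k+2}}\cdot\tfrac{N_4(\vj)}{j_1\cdots j_k\,\sigma^3}$ with $\tfrac{2^k}{D\pi^{k+2}}>0$, all coefficients of the numerator of $a_4$ are then negative for $r\in\rint$.

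\textbf{Main obstacle.} Step~3 is the hard part: it needs the $u^6$-coefficient of $1-\prod_ih(j_i;u)$ (hence $e_2,e_3$ of the $j_i^2$) and then a careful, fully expanded form of $N_4$ in the monomial basis — in particular checking that $\sum_ij_i^8$ cancels and that the five monomial coefficients collapse to the stated quadratics in $D$ — after which only the sign intervals and the exact endpoint match with $\rint$ remain to be verified.
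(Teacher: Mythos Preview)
Your approach is essentially the same as the paper's: both write $\of_k$ as the reciprocal (or ratio) of even power series with nonzero constant term, read off the form of the coefficients $a_i$ from the inversion recurrence, and then compute $a_4$ explicitly and check signs. Your use of the identity $\sin\theta/(1-\cos\theta)=\cot(\theta/2)$ to get the clean factorisation $\of_k=\tfrac{2^k}{D\pi^{k+2}j_1\cdots j_k}\cdot \Pi/\Delta$ is a pleasant variant; the paper instead writes $\of_k$ as $1/(e_0+e_2u^2+\cdots)$ by taking the Taylor series of $1/\og(j;u)$ and of $h(j;u)$ and multiplying. For the explicit $a_4$, the paper simply reports the polynomial (computed with Maple), whereas you outline the hand computation via Newton's identities and monomial symmetric functions; both give the same answer.

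There is one slip in Step~3 that you should fix. You assert that ``the monomial $\sum_i j_i^8$ cancels'' and then, in the very next clause, list a nonzero coefficient for $m_{(4)}$. These are contradictory, and the first statement is wrong: each of $\sigma^4,\rho\sigma^2,\rho^2,\sigma\tau$ contributes to $m_{(4)}=p_4$, and the contributions do \emph{not} cancel. In the paper's normalisation the coefficient of $\sum_i j_i^8$ in the numerator $P(\vj)$ equals $-3$ (and that of $m_{(3,1)}$ equals $-9$); with your scaling these are $-\pi^4/240$ and $-\pi^4/80$. Fortunately both are negative and $D$-independent, so your sign argument survives unchanged once you drop the erroneous ``cancels'' remark. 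Your identification of the binding constraint as $24D^2-12D+1<0$, with roots $(3\pm\sqrt3)/12$, matches the paper's condition $\tfrac{3-\sqrt3}{12}<r(1-r)\le\tfrac14$, which is exactly $r\in\rint$.
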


\begin{proof}%
\newcommand{\og}{\overline{g}}%
Let $\og(j;u) := g(j,y;u) / \sin(y j \pi)$.
Notice that $1/\og(j;u)$ and $h(j;u)$ have the following Taylor series:
\begin{align*}
  \frac{1}{\og(j;u)} & = c_1 j u + c_3 j^3 u^3 + c_5 j^5 u^5 + \cdots \qquad \text{and} \\
  h(j;u)             & = 1 + d_2 j^2 u^2 + d_4 j^4 u^4 + \cdots
\end{align*}
with $c_1 = \frac{\pi}{2}$ and $d_2 = - r(1-r) \pi^2$.
It follows that we have
\[
 \of_k(\vj; u) = \frac{1}{e_0 + e_2 u^2 + e_4 u^4 + \cdots}
\]
with
\begin{align*}
 e_0 & = \overbrace{\frac{r \cdot (1-r) \cdot \pi^{k+2}}{2^k}}^{c_1^k \cdot (-d_2)} \cdot  j_1 \cdot \ldots \cdot j_k \cdot (j_1^2 + \cdots + j_k^2)  && \text{and} \\
 e_i & = j_1 \cdots j_k \cdot \Ej{k}{(i+2)/2}                                         && \text{for $i = 0, 2, \ldots$}
\end{align*}
Since $e_0 > 0$, the power series $e_0 + e_2 u^2 + e_4 u^4 + \cdots$ can be inverted.
The inversion formula yields
 \[
  \of_k(\vj; u) = a_0 + a_2 u^2 + a_4 u^4 + \cdots
 \]
 with
 \begin{align*}
  a_0 & = \frac{1}{e_0} = \frac{2^k}{r \cdot (1-r) \cdot \pi^{k+2} \cdot j_1 \cdots j_k \cdot (j_1^2 + \cdots + j_k^2)} && \text{and} \\
  a_i & = - a_0 \cdot \sum_{\ell = 0, 2, \ldots, i-2} a_\ell e_{i-\ell} && \text{for $i = 2, 4, \ldots$}
\intertext{It follows by an easy induction that}
  a_i &= \frac{\Ej{k}{i}}{j_1 \cdots j_k \cdot (j_1^2 + \cdots + j_k^2)^{(i+2)/2}}
  && \text{for $i=0, 2, \ldots$}
 \end{align*}
Using further values of the Taylor coefficients $c_i, d_i$ from above, a straightforward but tedious computation shows that
\[
 a_4 = \frac{2^{k-4} P(\vj)}{45 \cdot \pi^{k-2} \cdot r \cdot (1-r) \cdot j_1 \cdots j_k \cdot (j_1^2 + \cdots + j_k^2)^3}\,,
\]
where
\begin{align*}
 P(\vj) & = \sum_{1 \le i_1 \le k} -3 j_{i_1}^8 + \sum_{1 \le i_1 < i_2 \le k} -9 (j_{i_1}^6 j_{i_2}^2 + j_{i_1}^2 j_{i_2}^6) + \mbox{} \\
        & \quad \sum_{1 \le i_1 < i_2 \le k} (720 (r(1-r))^2 - 240 r (1-r) + 8) j_{i_1}^4 j_{i_2}^4 + \mbox{} \\
        & \quad \sum_{1 \le i_1 < i_2 < i_3 \le k} (720 (r(1-r))^2 - 300 r (1-r) + 13) (j_{i_1}^4 j_{i_2}^2 j_{i_3}^2 +
                                                                                         j_{i_1}^2 j_{i_2}^4 j_{i_3}^2 +
                                                                                         j_{i_1}^2 j_{i_2}^2 j_{i_3}^4) + \mbox{} \\
        & \quad \sum_{1 \le i_1 < i_2 < i_3 < i_4 \le k} (1440 (r(1-r))^2 - 720 r (1-r) + 60) j_{i_1}^2 j_{i_2}^2 j_{i_3}^2 j_{i_4}^2 \,.
\end{align*}
We have determined the above coefficients of $P(\vj)$ using the computer algebra system Maple.
Now it is straightforward to verify that all coefficients of~$P(\vj)$ are negative, if $\frac{3 - \sqrt{3}}{12} < r \cdot (1-r) \le \frac{1}{4}$.
Those inequalities hold, if $r \in \rint$.
\qed
\end{proof}

The following lemma is used as an induction step in the proof of Lemma~\ref{lem:f-k-conv-finite} below.
\begin{lemma} \label{lem:f-k-induction-step}
 If $k \in \{2, 3, \ldots\}$ and $u > 0$, then
 \[
  \lim_{j_k \to 0} \left( j_k \cdot \of_k(j_1, \ldots, j_k; u) \right) = \frac{2}{\pi} \of_{k-1}(j_1, \ldots, j_{k-1}; u)\,,
 \]
 where the $j_i$ vary over the nonnegative reals.
 Consequently, with the Taylor expansion
  $
   \of_k(\vj; u) = a_{k;0}(\vj) + a_{k;2}(\vj) u^2 + a_{k;4}(\vj) u^4 + \cdots
  $
  from Lemma~\ref{lem:f-k-taylorExpansion} we also have
  \[
   \lim_{j_k \to 0} \left( j_k \cdot a_{k;i}(j_1, \ldots, j_k) \right) = \frac{2}{\pi} a_{k-1;i}(j_1, \ldots, j_{k-1})\,.
  \]
\end{lemma}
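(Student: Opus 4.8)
The plan is to read off the limit directly from the closed form of $\of_k$, and then to strengthen the pointwise convergence in~$u$ to a locally uniform one so that the statement about Taylor coefficients follows automatically.

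First I would record the closed form. Unwinding the definitions of $f_k$, $s(\vj,\vy)$ and $\of_k$, the factors $\sin(j_i\pi y_i)$ cancel, and (as already used in the proof of Lemma~\ref{lem:f-k-taylorExpansion}) one gets
\[
 \of_k(j_1,\ldots,j_k;u) \;=\; \frac{u^{k+2}\prod_{i=1}^k \overline{g}(j_i;u)}{1-\prod_{i=1}^k h(j_i;u)}\,,\qquad
 \overline{g}(j;u):=\frac{\sin(j\pi u)}{1-\cos(j\pi u)}=\cot\!\bigl(\tfrac{j\pi u}{2}\bigr)\,.
\]
Separating the single factor that depends on $j_k$,
\[
 j_k\cdot\of_k(j_1,\ldots,j_k;u)
  \;=\; \frac{\bigl(j_k\,\overline{g}(j_k;u)\bigr)\;u^{k+2}\prod_{i=1}^{k-1}\overline{g}(j_i;u)}
            {1-h(j_k;u)\prod_{i=1}^{k-1}h(j_i;u)}\,.
\]
As $j_k\to0$ with $u>0$ fixed, $h(j_k;u)=1-2r(1-r)\bigl(1-\cos(j_k\pi u)\bigr)\to1$, so the denominator tends to $1-\prod_{i=1}^{k-1}h(j_i;u)$ — nonzero wherever $\of_{k-1}(j_1,\ldots,j_{k-1};\cdot)$ is defined (in particular for all small $u>0$, and in the setting of Lemma~\ref{lem:f-k-conv-finite} where $u=1/N$ and $j_i\in\{1,\ldots,N-1\}$). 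For the numerator, $\sin(j_k\pi u)=j_k\pi u+O(j_k^3)$ and $1-\cos(j_k\pi u)=\tfrac12(j_k\pi u)^2+O(j_k^4)$ give
\[
 j_k\,\overline{g}(j_k;u)=\frac{j_k\sin(j_k\pi u)}{1-\cos(j_k\pi u)}\;\longrightarrow\;\frac{2}{\pi u}\,,
\]
and substituting, with $u^{k+2}\cdot\tfrac{2}{\pi u}=\tfrac{2}{\pi}\,u^{(k-1)+2}$, yields the first assertion $\lim_{j_k\to0} j_k\of_k(j_1,\ldots,j_k;u)=\tfrac{2}{\pi}\of_{k-1}(j_1,\ldots,j_{k-1};u)$.

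For the second assertion I would show that this convergence is uniform in~$u$ on a complex disk $\{|u|\le\rho\}$ with $\rho>0$ independent of $j_k$ (for $j_k$ small). The convenient rewriting is
\[
 j_k\of_k(j_1,\ldots,j_k;u)\;=\;\frac{\bigl(j_k u\,\overline{g}(j_k;u)\bigr)\cdot\prod_{i=1}^{k-1}\bigl(u\,\overline{g}(j_i;u)\bigr)}{\bigl(1-\prod_{i=1}^{k}h(j_i;u)\bigr)/u^2}\,,
\]
in which all three pieces extend holomorphically across $u=0$. Using $1/\overline{g}(j;u)=\tfrac{\pi}{2}ju+O(u^3)$ from the proof of Lemma~\ref{lem:f-k-taylorExpansion} (errors uniform once $|j_ku|$ is bounded), $j_k u\,\overline{g}(j_k;u)=\tfrac{2}{\pi}+O(j_k^2)$ uniformly on $\{|u|\le\rho\}$; the factors $u\,\overline{g}(j_i;u)$ for $i<k$ do not depend on $j_k$; and $\bigl(1-\prod_{i=1}^{k}h(j_i;u)\bigr)/u^2\to\bigl(1-\prod_{i=1}^{k-1}h(j_i;u)\bigr)/u^2$ uniformly on $\{|u|\le\rho\}$, the latter taking at $u=0$ the value $r(1-r)\pi^2(j_1^2+\cdots+j_k^2)$, which tends to the strictly positive $r(1-r)\pi^2(j_1^2+\cdots+j_{k-1}^2)$. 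Hence, for $\rho$ small and $j_k$ small, the denominator piece stays bounded away from~$0$ on the disk, so $j_k\of_k(j_1,\ldots,j_k;\cdot)\to\tfrac{2}{\pi}\of_{k-1}(j_1,\ldots,j_{k-1};\cdot)$ uniformly there; since locally uniform limits of holomorphic functions have Taylor coefficients converging to those of the limit, comparing coefficients of $u^i$ gives $\lim_{j_k\to0} j_k\,a_{k;i}(j_1,\ldots,j_k)=\tfrac{2}{\pi}\,a_{k-1;i}(j_1,\ldots,j_{k-1})$.

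I expect the only real work to be the bookkeeping in the last step: making ``uniform on a fixed disk, for all small $j_k$'' rigorous, i.e.\ that the holomorphic continuation of $\of_k(j_1,\ldots,j_k;\cdot)$ lives on a $j_k$-independent disk around~$0$ and that its denominator does not degenerate there. Both reduce to two facts already visible above — the poles of $\overline{g}(j_k;u)=\cot(j_k\pi u/2)$ lie at $|u|=2/j_k\to\infty$, and the value $r(1-r)\pi^2(j_1^2+\cdots+j_k^2)$ of the denominator piece at $u=0$ converges to the positive number $r(1-r)\pi^2(j_1^2+\cdots+j_{k-1}^2)$ — after which the remaining estimates are routine.
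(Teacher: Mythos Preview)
Your argument for the first assertion is essentially the paper's: factor out the $j_k$-dependent piece, use $h(j_k;u)\to1$, and compute $\lim_{j_k\to0} j_k\,\overline{g}(j_k;u)=2/(\pi u)$. The only cosmetic difference is that the paper obtains this last limit by l'H\^opital's rule rather than Taylor expansion.

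Where you diverge from the paper is in the second assertion about the Taylor coefficients $a_{k;i}$. The paper supplies no argument whatsoever for this --- it is just stated as a ``consequently'' --- whereas you give a genuine proof: you promote the pointwise convergence in~$u$ to locally uniform convergence of holomorphic functions on a $j_k$-independent disk around $u=0$, and then invoke the standard fact that Taylor coefficients pass to the limit under locally uniform convergence. That argument is correct and fills a gap the paper leaves open; the structural form $a_{k;i}(\vj)=\Ej{k}{i}/\bigl(j_1\cdots j_k(j_1^2+\cdots+j_k^2)^{(i+2)/2}\bigr)$ from Lemma~\ref{lem:f-k-taylorExpansion} would give an alternative, more computational route (compute $\lim_{j_k\to0} j_k\,a_{k;i}$ directly from this formula), but your holomorphy argument is cleaner and does not require tracking the specific polynomials.
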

\begin{proof}
 As $h(0, u) = 1$, it suffices to show that $\lim_{j_k \to 0} \frac{j_k \sin(j_k \pi u)}{1 - \cos(j_k \pi u)} = \frac{2}{\pi u}$.
 This follows easily from l'Hopital's rule:
 \begin{align*}
  \lim_{j_k \to 0} \frac{j_k \sin(j_k \pi u)}{1 - \cos(j_k \pi u)}
   & = \lim_{j_k \to 0} \frac{\sin(j_k \pi u) + j_k \cos(j_k \pi u) \cdot \pi u}{\sin(j_k \pi u) \cdot \pi u} \\
   & = \frac{1}{\pi u} + \lim_{j_k \to 0} \frac{j_k \cos(j_k \pi u)}{\sin(j_k \pi u)} \\
   & = \frac{1}{\pi u} + \lim_{j_k \to 0} \frac{\cos(j_k \pi u) - j_k \sin(j_k \pi u)}{\cos(j_k \pi u) \cdot \pi u} \\
   & = \frac{1}{\pi u} + \frac{1}{\pi u} = \frac{2}{\pi u}
 \end{align*}
\qed
\end{proof}

Now we can prove Lemma~\ref{lem:f-k-conv-finite} which is restated here.
\begin{qlemma}{\ref{lem:f-k-conv-finite}}
 \stmtlemfkconvfinite
\end{qlemma}
\begin{proof}
 By Lemma~\ref{lem:f-k-taylorExpansion} it is equivalent to prove
 \[
  \sum_{\vj \in \{1, \ldots, N-1\}^k} \left| \frac{a_2(\vj)}{N^2} + \frac{a_4(\vj)}{N^4} + \cdots \right| = O \left(\frac{(\log N)^k}{N^2}\right)\,.
 \]
 Notice that an easy induction shows that
 \[
  \sum_{\vj \in \{1, \ldots, N-1\}^k} \frac{1}{j_1 \cdots j_k}
   = \sum_{j_k=1}^{N-1} \frac{1}{j_k} \sum_{\vj \in \{1, \ldots, N-1\}^{k-1}} \frac{1}{j_1 \cdots j_{k-1}}
   = O((\log N)^k)\,.
 \]
 Hence, with Lemma~\ref{lem:f-k-taylorExpansion} we have
 \[
  \sum_{\vj \in \{1, \ldots, N-1\}^k} \left| \frac{a_2(\vj)}{N^2} \right| = \frac{1}{N^2} \cdot \sum_{\vj \in \{1, \ldots, N-1\}^k} \frac{O(1)}{j_1 \cdots j_k}
   = O \left(\frac{(\log N)^k}{N^2}\right)\,.
 \]
 Similarly, we also have
 \begin{align*}
  \sum_{\vj \in \{1, \ldots, N-1\}^k} \left| \frac{a_4(\vj)}{N^4} \right|
    & = \frac{1}{N^2} \cdot \sum_{\vj \in \{1, \ldots, N-1\}^k} \frac{O(j_1^2 + \cdots + j_k^2)}{N^2 \cdot j_1 \cdots j_k} \\
    & = \frac{1}{N^2} \sum_{\vj \in \{1, \ldots, N-1\}^k} \frac{O(1)}{j_1 \cdots j_k}
      = O \left(\frac{(\log N)^k}{N^2}\right)\,.
 \end{align*}
 Now it suffices to show that, for any fixed $k \in \N_+$,
 \begin{equation*}
  R_k(\vj; 1/N) := \frac{\of_k(\vj; 1/N) - a_0(\vj) - a_2(\vj)/N^2}{a_4(\vj)/N^4} = \frac{a_4(\vj)/N^4 + a_6(\vj)/N^6 + \cdots}{a_4(\vj)/N^4}
 \end{equation*}
 is bounded over all $N \in \N_+$ and all $\vj \in \{1, \ldots, N-1\}^k$,
  because then we also have that
 \[
  \sum_{\vj \in \{1, \ldots, N-1\}^k} \left| \frac{a_4(\vj)}{N^4} + \frac{a_6(\vj)}{N^6} + \cdots \right| = O \left(\frac{(\log N)^k}{N^2}\right)\,.
 \]

 It follows from Lemma~\ref{lem:f-k-taylorExpansion} that $R_k$ depends only on $j_1/N, \ldots, j_k/N$;
  i.e., there is a function
\newcommand{\wR}{\widetilde{R}}%
  $
   \wR_k: [0,1]^k \to \R
  $
  such that $R_k(j_1, \ldots, j_k; 1/N) = \wR_k(j_1/N, \ldots, j_k/N)$.
 Therefore it suffices to take $N=1$ and to show that 
  \[
    R_k(\vj; 1) = \frac{\of_k(\vj; 1) - a_0(\vj) - a_2(\vj)}{a_4(\vj)}
  \]
  is bounded over all $\vj \in (0,1]^k$.
 We first argue that $R_k(\vj; 1)$ does not have poles for $\vj \in (0,1]^k$.
 This follows from the facts that (1) the function $\of_k(\vj; 1)$ clearly does not have poles there,
  (2) the coefficients $a_0(\vj)$ and $a_2(\vj)$ do not have poles there by Lemma~\ref{lem:f-k-taylorExpansion},
  and (3) we have $a_4(\vj) < 0$ for $\vj \in (0,1]^k$ by Lemma~\ref{lem:f-k-taylorExpansion}.
 Let $\vj \in [0,1]^k \setminus (0,1]^k$.
 By symmetry, we can assume that there is $\ell \in \{1, \ldots, k\}$ such that
  $j_i \ne 0$ for $i \in \{1, \ldots, \ell-1\}$ and
  $j_i = 0$ for $i \in \{\ell, \ldots, k\}$.
\newcommand{\vzero}{\vec{0}}%
 Denote by $\vzero$ a vector $(0, \ldots, 0)$, whose dimension is clear from the context.
 We need to show that $\lim_{(j_\ell, \ldots, j_k) \to \vzero} R_k(\vj; 1)$ exists and is finite.
 We proceed by induction on~$k$.
 For the induction base, let $k=1$.
 Basic computations show that%
 \[
  R_1(j_1; 1) = \frac{240}{\pi^4 j_1^4} - \frac{60 \sin(j_1 \pi)}{\pi (1 - \cos(j_1 \pi))^2 j_1} \,.
 \]
 One can use l'Hopital's rule to find $\lim_{j_1 \to 0} R_1(j_1;1) = 1$.
 For the induction step, let $k \ge 2$.
 In the following we write $a_{k;i}$ to make the dependence of $a_{i}$ on~$k$ explicit.
 We have:
 \begin{align*}
   & \quad \lim_{(j_\ell, \ldots, j_k) \to \vzero} R_k(\vj; 1) \\
   & = \lim_{(j_\ell, \ldots, j_{k-1}) \to \vzero} \lim_{j_k \to 0} \frac{\of_k(\vj; 1) - a_{k;0}(\vj) - a_{k;2}(\vj)}{a_{k;4}} \\
   & = \lim_{(j_\ell, \ldots, j_{k-1}) \to \vzero}
        \frac{\lim_{j_k \to 0} \left( j_k \cdot ( \of_k(\vj; 1) - a_{k;0}(\vj) - a_{k;2}(\vj) ) \right)}{\lim_{j_k \to 0} \left(j_k \cdot a_{k;4}(\vj)\right)} \\
   & = \lim_{(j_\ell, \ldots, j_{k-1}) \to \vzero}
        \frac{ \of_{k-1}(j_1, \ldots, j_{k-1}; 1) - a_{k-1;0}(j_1, \ldots, j_{k-1}) - a_{k-1;2}(j_1, \ldots, j_{k-1})}
             {a_{k-1;4}(j_1, \ldots, j_{k-1})} \\
   & & \hspace{-30mm}\text{(by Lemma~\ref{lem:f-k-induction-step})} \\
   & = \lim_{(j_\ell, \ldots, j_{k-1}) \to \vzero} R_{k-1}(j_1, \ldots, j_{k-1};1) \,,
 \end{align*}
 where the last limit exists and is finite by induction hypothesis.
\qed
\end{proof}

\subsection{Proof of Lemma~\ref{lem:f-k-conv-infinite}} \label{sub:lem:f-k-conv-infinite}

\begin{lemma} \label{lem:ben-sum-1}
 Let $j_1, \ldots, j_k \in \N_+$.
 Let $\delta, \varepsilon \in \R$ with $\delta > 0$ and $0 \le \varepsilon \le 2 \delta$.
 Then
 \[
  \frac{j_k^\varepsilon}{\left( j_1^2 + \cdots + j_k^2 \right)^{\delta}} \le \frac{1}{\left( j_1^2 + \cdots + j_{k-1}^2 \right)^{\delta - \varepsilon/2}}\,.
 \]
\end{lemma}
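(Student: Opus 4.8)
The plan is to reduce the claimed inequality to an elementary comparison of the form $A^{p}B^{q}\le (A+B)^{\delta}$ for nonnegative reals $A,B$ and nonnegative exponents $p,q$ with $p+q=\delta$, which then follows termwise from the monotonicity of $t\mapsto t^{c}$ for $c\ge 0$.

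First I would abbreviate $S:=j_1^2+\cdots+j_{k-1}^2$, so that $j_1^2+\cdots+j_k^2=S+j_k^2$. For $k\ge 2$ all of $S$, $S^{\delta-\varepsilon/2}$ and $(S+j_k^2)^{\delta}$ are positive, so the claimed inequality is equivalent (after clearing denominators) to
\[
 j_k^{\varepsilon}\cdot S^{\,\delta-\varepsilon/2}\ \le\ (S+j_k^2)^{\delta}\,.
\]
Here the two exponents $\varepsilon/2$ and $\delta-\varepsilon/2$ are both nonnegative: the first because $\varepsilon\ge 0$, the second because $\varepsilon\le 2\delta$; and they sum to $\delta$. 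Writing $j_k^{\varepsilon}=(j_k^2)^{\varepsilon/2}$, it therefore suffices to show that for all reals $A,B\ge 0$ and all exponents $p,q\ge 0$ with $p+q=\delta$ one has $A^{p}B^{q}\le (A+B)^{\delta}$.

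This last inequality is immediate: since $0\le A\le A+B$ and $p\ge 0$, monotonicity of $t\mapsto t^{p}$ gives $A^{p}\le (A+B)^{p}$, and likewise $B^{q}\le (A+B)^{q}$; multiplying and using $p+q=\delta$ yields $A^{p}B^{q}\le (A+B)^{p+q}=(A+B)^{\delta}$. Substituting back $A=j_k^2$, $B=S$, $p=\varepsilon/2$, $q=\delta-\varepsilon/2$ gives the lemma.

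I do not expect a genuine obstacle here; the only point meriting a word of care is the degenerate case $k=1$, where $S$ is the empty sum $0$. If $\delta=\varepsilon/2$ both sides equal $1$, and if $\delta>\varepsilon/2$ the right-hand side is $+\infty$, so the stated inequality still holds; in the applications, however, one always has $k\ge 2$ (hence $S\ge 1$), so this case never actually arises.
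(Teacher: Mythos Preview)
Your proof is correct and follows essentially the same reduction as the paper: both rewrite the claim as $j_k^{\varepsilon}\cdot S^{\delta-\varepsilon/2}\le (S+j_k^2)^{\delta}$ with $S=j_1^2+\cdots+j_{k-1}^2$ and then appeal to elementary monotonicity. The paper takes $\delta$-th roots and splits into the cases $j_k\le \sqrt{S}$ and $j_k\ge\sqrt{S}$, whereas your $A^{p}B^{q}\le(A+B)^{p+q}$ formulation handles both at once; this is a cosmetic difference only.
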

\begin{proof}
 Define $r := \sqrt{j_1^2 + \cdots j_{k-1}^2}$.
 Then we need to prove that
 \[
  j_k^\varepsilon \cdot r^{2\delta - \varepsilon} \le \left( j_k^2 + r^2 \right)^\delta \,,
 \]
 or, equivalently,
 \[
  j_k^{\varepsilon/\delta} \cdot r^{2 - \varepsilon/\delta} \le j_k^2 + r^2 \,.
 \]
 If $j_k \le r$, then
  $
       j_k^{\varepsilon/\delta} \cdot r^{2 - \varepsilon/\delta}
   \le r^{\varepsilon/\delta} \cdot r^{2 - \varepsilon/\delta}
   =   r^2 \le j_k^2 + r^2
  $.
 The case $j_k \ge r$ is similar.
\qed
\end{proof}

\begin{lemma} \label{lem:ben-sum-2}
 Let $\delta \in \R_+$ and $k \in \N_+$.
 The following series converges:
 \[
  \sum_{(j_1, \ldots, j_k) \in \N_+^k} \frac{1}{j_1 \cdot \ldots \cdot j_k \cdot (j_1^2 + \cdots + j_k^2)^\delta} \,.
 \]
\end{lemma}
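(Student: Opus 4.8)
The plan is to prove the claim by induction on $k$, peeling off the last summation variable $j_k$ and using Lemma~\ref{lem:ben-sum-1} to absorb the resulting radial factor into a sum of the same shape in one fewer variable, but with the exponent $\delta$ replaced by $\delta/2$. Since all summands are positive, Tonelli's theorem lets us evaluate the sum over $\N_+^k$ as an iterated sum in any order, so there is no convergence-interchange subtlety; the whole argument is just a matter of producing a clean recursive bound. The induction is stated (and must be proved) uniformly over all $\delta \in \R_+$, since each step halves $\delta$.

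For the base case $k=1$ the series is $\sum_{j_1\ge 1} j_1^{-1}\,(j_1^2)^{-\delta} = \sum_{j_1 \ge 1} j_1^{-(1+2\delta)}$, which converges because $1+2\delta > 1$. For the induction step, fix $k \ge 2$ and $\delta > 0$, and apply Lemma~\ref{lem:ben-sum-1} with $\varepsilon := \delta$ (note that $0 \le \delta \le 2\delta$, as required, and $\delta - \varepsilon/2 = \delta/2 > 0$). Writing $\frac{1}{j_k (j_1^2+\cdots+j_k^2)^\delta} = \frac{1}{j_k^{1+\delta}}\cdot\frac{j_k^\delta}{(j_1^2+\cdots+j_k^2)^\delta}$ and bounding the second factor by the lemma gives
\[
 \frac{1}{j_k(j_1^2+\cdots+j_k^2)^\delta} \;\le\; \frac{1}{j_k^{1+\delta}}\cdot\frac{1}{(j_1^2+\cdots+j_{k-1}^2)^{\delta/2}}\,.
\]
Summing first over $j_k \ge 1$, with $C_\delta := \sum_{j_k\ge 1} j_k^{-(1+\delta)} < \infty$, and then over $(j_1,\dots,j_{k-1}) \in \N_+^{k-1}$, we obtain
\[
 \sum_{(j_1,\dots,j_k)\in\N_+^k} \frac{1}{j_1\cdots j_k\,(j_1^2+\cdots+j_k^2)^\delta}
 \;\le\; C_\delta \sum_{(j_1,\dots,j_{k-1})\in\N_+^{k-1}} \frac{1}{j_1\cdots j_{k-1}\,(j_1^2+\cdots+j_{k-1}^2)^{\delta/2}}\,,
\]
and the right-hand side is finite by the induction hypothesis applied with dimension $k-1$ and exponent $\delta/2 > 0$.

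There is no genuine obstacle here; the only points requiring care are bookkeeping ones. One must treat $k=1$ separately, since for $k=1$ the "reduced" radial term $j_1^2+\cdots+j_{k-1}^2$ is an empty sum and Lemma~\ref{lem:ben-sum-1} does not apply. One must also note that $j_k=1$ is included in the range of summation, so that $\sum_{j_k\ge 1} j_k^{-(1+\delta)}$ is indeed a finite constant, and that the induction genuinely needs the quantifier over all positive $\delta$ because of the halving. Any $\varepsilon$ with $0 < \varepsilon < 2\delta$ works in place of $\varepsilon = \delta$; the choice is immaterial.
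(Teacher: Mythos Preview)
Your proof is correct and follows essentially the same approach as the paper: induction on $k$, peeling off $j_k$ by writing $j_k^{-1} = j_k^{-(1+\delta)}\cdot j_k^{\delta}$, applying Lemma~\ref{lem:ben-sum-1} with $\varepsilon=\delta$ to reduce to the $(k-1)$-dimensional sum with exponent $\delta/2$, and invoking the induction hypothesis. The paper's proof is the same argument with slightly less commentary.
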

\begin{proof}
 The proof is by induction on~$k$.
 The assertion clearly holds for the base case $k=1$.
 For $k \ge 2$ we have:
 \begin{align*}
  & \ \quad \sum_{(j_1, \ldots, j_k) \in \N_+^k} \frac{1}{j_1 \cdot \ldots \cdot j_k \cdot (j_1^2 + \cdots + j_k^2)^\delta} \\
  & = \sum_{j_k \in \N_+} \frac{1}{j_k^{1+\delta}} \sum_{j_1, \ldots, j_{k-1} \in \N_+}
         \frac{j_k^\delta}{j_1 \cdot \ldots \cdot j_{k-1} \cdot \left( j_1^2 + \cdots + j_k^2 \right)^\delta } \\
  & \le \sum_{j_k \in \N_+} \frac{1}{j_k^{1+\delta}} \sum_{j_1, \ldots, j_{k-1} \in \N_+}
         \frac{1}{j_1 \cdot \ldots \cdot j_{k-1} \cdot \left( j_1^2 + \cdots + j_{k-1}^2 \right)^{\delta/2} }
           && \text{(Lemma~\ref{lem:ben-sum-1})} \\
  & = \sum_{j_k \in \N_+} \frac{1}{j_k^{1+\delta}} \cdot C_{\delta/2, k-1} \,,
 \end{align*}
  where $C_{\delta/2, k-1}$ is the constant from the induction hypothesis.
 The series $\sum_{j_k \in \N_+} {j_k^{-(1+\delta)}}$ clearly converges.
\qed
\end{proof}

Now we can prove Lemma~\ref{lem:f-k-conv-infinite}, which is restated here.
\begin{qlemma}{\ref{lem:f-k-conv-infinite}}
 \stmtlemfkconvinfinite
\end{qlemma}
\begin{proof}
 Let
  \[
   R_k(N) :=    \sum_{j_k = N}^\infty \sum_{(j_1, \ldots, j_{k-1}) \in \N_+^{k-1}} \frac{1}{j_1 \cdot \ldots \cdot j_k \cdot (j_1^2 + \cdots + j_k^2)} \,.
  \]
 We have for all $\varepsilon \in (0,2)$:
 \begin{align*}
   R_k(N)
   & = \sum_{j_k = N}^\infty \frac{1}{j_k^{3 - \varepsilon}}
         \sum_{(j_1, \ldots, j_{k-1}) \in \N_+^{k-1}} \frac{j_k^{2-\varepsilon}}{j_1 \cdot \ldots \cdot j_{k-1} \cdot (j_1^2 + \cdots + j_k^2)} \\
   & \le \sum_{j_k = N}^\infty \frac{1}{j_k^{3 - \varepsilon}}
          \sum_{(j_1, \ldots, j_{k-1}) \in \N_+^{k-1}} \frac{1}{j_1 \cdot \ldots \cdot j_{k-1} \cdot (j_1^2 + \cdots + j_{k-1}^2)^{\varepsilon/2}}
            && \text{(Lemma~\ref{lem:ben-sum-1})} \\
   & = \sum_{j_k = N}^\infty \frac{1}{j_k^{3 - \varepsilon}}
          \cdot C_{\varepsilon/2, k-1}
            && \text{(Lemma~\ref{lem:ben-sum-2})} \\
   & \le C_{\varepsilon/2, k-1} \cdot \int_{x = N-1}^\infty \frac{1}{x^{3 - \varepsilon}} \,dx \\
   & = \frac{C_{\varepsilon/2, k-1}}{2-\varepsilon} \cdot (N-1)^{-2 + \varepsilon} = O(N^{-2 + \varepsilon})\,.
 \end{align*}
\qed
\end{proof}

\section{Proof of Theorem~\ref{thm:full}}

Theorem~\ref{thm:full} is restated here.

\begin{qtheorem}{\ref{thm:full}}
 \stmtthmfull
\end{qtheorem}
\begin{proof}
 Recall that Proposition~\ref{prop:Balding} expresses the distribution
 of~$\tim$ in terms of the distributions of one-dimensional random
 walks with absorbing barriers at $0$ and~$N$.  It is well-known that
 such random walks converge, for large~$N$, to an appropriately scaled
 Brownian motion; see \cite[Chapter~XIV]{FellerVol266}.  (In passing
 we remark that the approximation of~$\F{N}$ through $\frac{N^2 \wF}{r
   (1-r)}$ in Proposition~\ref{prop:expectation-continuous} is exactly
 the same approximation; however, there we also establish bounds on
 the rate of convergence, which are not needed here.)  For the
 theorem it suffices to consider $N$ tokens in a Brownian motion
 placed equidistantly on a circle of unit circumference, so that the
 variance of the relative movement of two tokens is $2 \sigma^2 =
 2D/N^2$ per time unit.  We use Balding's analysis~\cite{Balding88} in
 the following.  Let $S(t)$ denote the expected number of tokens at
 time~$t$.  Balding~\cite[p.~740]{Balding88} gives
  \[
   S(t) = 1 + \frac{2 N}{\pi} \sum_{j=1}^\infty \frac{1}{j} \tan \frac{j \pi}{N} e^{-4 \pi^2 j^2 \sigma^2 t} \,.
  \]
\newcommand{\wS}{\widetilde{S}}%
 We have
  \[
   \wS(t) := \lim_{N \to \infty} S(t) = 1 + 2 \sum_{j=1}^\infty e^{-4 \pi^2 j^2 \sigma^2 t} \,.
  \]
 As $S(t) \ge \P{\tim \le t} \cdot 1 + \P{\tim > t} \cdot 3 = 2 \P{\tim > t} + 1$, we have for all $\varepsilon > 0$ that
  \[
   \P{\tim > t} \le \min \left( 1, \frac{S(t) - 1}{2} \right) \le \min \left( 1, (1+\varepsilon) \frac{\wS(t) - 1}{2} \right)
  \]
   for almost all odd~$N$.
\renewcommand{\ts}{t_1}%
\newcommand{\tp}{t_2}%
 With $\ts := \frac{1}{100 \sigma^2}$ we get $\frac{\wS(\ts) - 1}{2} \approx 0.91 < 1$.
 Notice that $\wS(t)$ is decreasing.
 Hence we obtain
  \begin{align*}
   \E \tim & = \int_0^\infty \P{\tim > t}\, dt \le \ts + (1+\varepsilon) \int_{\ts}^\infty \sum_{j=1}^\infty e^{-4 \pi^2 j^2 \sigma^2 t} \, dt \\
             & = \ts + (1+\varepsilon) \sum_{j=1}^\infty \frac{e^{-\pi^2 j^2 / 25}}{4 \pi^2 j^2 \sigma^2}
               \le \frac{0.0285}{\sigma^2} \,,
  \end{align*}
  where the last inequality holds for small~$\varepsilon$.
 The first statement of the theorem follows by setting $\sigma^2 = D/N^2$.
 The second statement follows by noting that with $\tp := \frac{2}{100 \sigma^2}$ we get $\frac{\wS(\tp) - 1}{2} \approx 0.497 < 0.5$.
\qed
\end{proof}

}{}

\end{document}